\newif\ifextended
\extendedtrue

\newif\ifanonymous
\anonymousfalse

\documentclass[runningheads,envcountsame,envcountsect]{llncs}
\usepackage[T1]{fontenc}
\usepackage[british]{babel}
\usepackage[disable]{todonotes}
\usepackage{graphicx}

\usepackage[right]{lineno}

\usepackage[bookmarks,unicode,colorlinks,allcolors=blue,pdfusetitle]{hyperref}
\hypersetup{
  pdfauthor    = {Constantin Enea, Azadeh Farzan, and Dominik Klumpp},
  pdfsubject   = {\ifextended(Extended Version)\else CAV 2026\fi},
  colorlinks   = true, %
  urlcolor     = blue, %
  linkcolor    = blue, %
  citecolor    = blue  %
}

\usepackage{color}
\renewcommand\UrlFont{\color{blue}\rmfamily}
\usepackage[misc]{ifsym}
\usepackage{stmaryrd}
\usepackage{bbding}

\ifextended
  \usepackage[bibliography=common,forwardlinking=yes]{apxproof}
\else
  \usepackage[appendix=strip]{apxproof}
\fi

\usepackage{amsmath,amsthm,amsfonts,amssymb}
\usepackage{mathtools}
\usepackage{tikz}
\usetikzlibrary{shapes}
\usetikzlibrary{decorations.pathmorphing}

\usepackage[capitalise,nameinlink]{cleveref}

\usepackage{xspace}

\usepackage{multirow}
\usepackage{booktabs}

\usepackage{subcaption}
\usepackage{enumitem}

\usepackage{listings}
\numberwithin{equation}{section}

\newtheoremrep{definition}{Definition}[section]
\newtheoremrep{observation}[definition]{Observation}
\crefname{observation}{Observation}{Observations}
\newtheoremrep{lemma}[definition]{Lemma}
\newtheoremrep{proposition}[definition]{Proposition}
\newtheoremrep{theorem}[definition]{Theorem}
\newtheoremrep{corollary}[definition]{Corollary}

\newtheoremrep{thmalgorithm}[definition]{Algorithm}
\crefname{thmalgorithm}{Algorithm}{Algorithms}

\makeatletter
\def\orcidID#1{{\href{http://orcid.org/#1}{\protect\raisebox{-1.25pt}{\protect\includegraphics{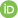}}}}}
\makeatother

\newcommand\N{\mathbb{N}}
\newcommand\powerset[1]{\wp\!\left(#1\right)}
\newcommand\compl[1]{{#1}^\mathsf{c}}

\newcommand\comp{\mathbin{\circ}}

\colorlet{th1}{yellow}
\colorlet{th2}{green!80!blue}
\tikzset{stmt/.style={font=\ttfamily\scriptsize,shape=rectangle,rounded corners=.2em,fill=gray!40,inner xsep=.3em,inner ysep=0em}}

\newcommand\ind[2]{\langle #1 : #2 \rangle}

\newcommand\Actions{\mathsf{Act}}

\newcommand\alphabet[1]{\Sigma(#1)}

\newcommand\indexed{\mathsf{idx}}

\newcommand\localTrace[2]{\lfloor #1 \rfloor_{#2}}

\newcommand\liftIdx[1]{\widehat{#1}}

\renewcommand\S{\mathbb{S}}
\newcommand\syncPred[1][]{\mathsf{sync}_{#1}}
\newcommand\sync[2][]{\syncPred[#1](#2)}

\newcommand\locks{\mathsf{lock}}
\newcommand\acq[1]{\texttt{acq}(#1)}
\newcommand\rel[1]{\texttt{rel}(#1)}

\newcommand\barrier{\bullet}

\newcommand\Loc{\mathbb{L}}
\newcommand\init{\mathsf{init}}
\newcommand\exit{\mathsf{exit}}

\newcommand\traces[1]{\mathcal{T}(#1)}

\newcommand\lang[1]{\mathcal{L}(#1)}

\newcommand\coveredby[1]{\sqsubseteq_{#1}}

\newcommand\F{\mathcal{F}}

\newcommand\fuseblocks[2]{#1 \lhd_\mathsf{at} #2}

\newcommand\progOrderRel{\mathsf{po}}
\newcommand\progOrder[2]{(#1,#2)\in\progOrderRel}

\newcommand\addbarriers[2]{#1 \lhd_\barrier #2}

\newcommand\phaseOrder{\mathsf{pho}}

\newcommand{\hurl}[1]{{\scriptsize\UrlFont\href{https://#1}{\path{#1}}}}

\newcommand\reminder[1]{}

\begin{document}
\title{On the Complexity of Checking Soundness\\ of Natural Reductions}
\ifextended\subtitle{(Extended Version)}\fi
\ifanonymous%
  \author{}
  \institute{}
\else
  \author{%
    Constantin Enea\inst{1}~\orcidID{0000-0003-2727-8865}\and%
    Azadeh Farzan\inst{2}~\orcidID{0000-0001-9005-2653}\and%
    Dominik Klumpp\inst{1}~\orcidID{0000-0003-4885-0728}%
  }%
  \authorrunning{C.~Enea, A.~Farzan, D.~Klumpp}
  \institute{%
    LIX -- CNRS -- \'Ecole Polytechnique, Palaiseau, France\\
    \email{\{cenea, klumpp\}@lix.polytechnique.fr}
    \and
    University of Toronto, Toronto, Canada\\
    \email{azadeh@cs.toronto.edu}
  }%
\fi

\maketitle              %
\begin{abstract}
The verification of \emph{reductions}, representative subsets of interleavings, simplifies correctness proofs of parameterized concurrent programs.
We introduce an expressive class of syntactic reductions, which we call \emph{natural reductions}.
Natural reductions are specified by introducing atomic blocks and global rendezvous points in the parameterized program's thread template.
We study the problem of deciding whether a given natural reduction is sound wrt.\ a given (semi-)com\-mutativity relation.
In the case that there is no synchronization between threads, we present a sound and complete polynomial-time algorithm.
In the case where synchronization is considered,
we provide a general lower bound for the problem (parametric in the synchronization mechanism),
and show that the problem is {\sc coNP}-hard already for a simple mechanism like locking.
\end{abstract}

\section{Introduction}

A {\em reduction} of a concurrent program is a program with a subset of the behaviors of the original that can be {\em soundly} verified in its place. A big body of research in program verification, in both algorithmic and deductive traditions, has argued that the use of reductions can be critical to the success of the verification task.  On the algorithmic verification side, the reductions either come from an a priori fixed set~\cite{pldi22:sound-seq,popl24:parameterized} or the verification algorithm has to pay the steep cost of searching for them~\cite{farzan20:red-safety-proofs}. 
On the deductive verification side, proof systems like CSPEC~\cite{DBLP:conf/osdi/ChajedKLZ18}, Armada~\cite{DBLP:conf/pldi/LorchCKPQSWZ20}, Anchor~\cite{flanagan:anchor} and Civl~\cite{DBLP:conf/cav/HawblitzelPQT15}
include proof tactics based on Lipton's reduction~\cite{lipton75:movers} for showing that a user-provided reduction is sound,
which can be interleaved with other proof tactics that relate to inductive invariant reasoning, for instance.
These proof systems have been used to verify real-world examples such as a concurrent garbage collector~\cite{DBLP:conf/cav/HawblitzelPQT15}, an implementation of the Paxos protocol~\cite{DBLP:conf/pldi/KraglEHMQ20}, or concurrent data structures~\cite{DBLP:journals/computing/MutluergilT19}.

In this paper, we investigate the complexity of checking whether some given reduction of a concurrent program is sound. We focus on {\em parameterized concurrent programs}, that is, concurrent programs in which an arbitrary number of threads can be instantiated from a single thread template. These programs appear everywhere in distributed and parallel computing. 
We consider a class of reductions that we call \emph{natural} that can be specified syntactically with ease and make it straightforward for programmers to envision the reduced program and reason about it. Hence, {\em natural reductions} enable programmers to interact with verification tools by proposing reductions that may simplify verification tasks.

Natural reductions are defined through two syntactic program transformations, which both have deep connections to traditions of reasoning about concurrent programs: 
\begin{itemize}
\item {\em Atomic Blocks}: The programmer can specify any number of syntactic atomic blocks in the program, which accordingly prunes the space of possible concurrent program behaviours,
  excluding all behaviours in which one thread's execution of the atomic block is interrupted by another thread.
\item {\em Synchronous Reductions:} The programmer uses a unique symbol as a {\em global rendezvous point} among all existing threads, and as such restricts the space of program behaviours to those in which the threads meet at the rendezvous point before continuing.
\end{itemize}
The former has been broadly used as the foundation of {\em local reasoning} in concurrency, and the latter has been used as the means of simplifying reasoning about distributed protocols \cite{DBLP:conf/podc/ChouG88,DBLP:journals/scp/ElradF82,DBLP:conf/cav/DamianDMW19}. In many distributed protocols, the behavior of each process is
structured as a sequence of rounds and it can be shown that processes executing rounds synchronously and in lock-step is a sound reduction of the original set of asynchronous behaviors (where processes may be executing different
rounds at a point of time). 
The user can use a combination of both to restrict the set of concurrent program behaviours and hence {\em specify} a reduction.  

We say a (specified) reduction is {\em sound} if every behaviour of the original program is equivalent to a behaviour in the reduction up to an {\em equivalence relation} induced by a {\em sound commutativity relation} on their common set of commands. A sound commutativity relation is a (not necessarily symmetric%
\footnote{In the non-symmetric case, the equivalence relation degenerates to a preorder.}%
) relation on the set of atomic actions that includes a pair $(a,b)$ if and only if the semantics of~$b\;a$ includes all behaviours of~$a\;b$.
A {\em sound reduction} can be soundly verified in place of the original program (i.e., the reduction satisfies the same postconditions as the original program).
Hence, given a {\em natural reduction} specified by the user, we are interested in the problem of determining whether it is sound.

In this paper, we investigate this problem for two different abstract models of program behaviours. First, we consider a model in which the program data is entirely abstracted away, and hence every valid syntactic behaviour of the program is valid behaviour in the abstract model. In a sense, the only information remaining about program semantics is given through the {\em sound commutativity} relation on the set of actions. We show that under this model, any natural reduction can be checked in {\em polynomial time} for soundness (\cref{thm:nat-red-soundness-polynomial}). Our argument is constructive. We present an algorithm that can check the soundness of a natural reduction proposed by atomic blocks in polynomial time. We also present an algorithm that can check the soundness of a synchronous reduction in polynomial time. We then argue how the algorithms compose, and therefore, a natural reduction with several atomic blocks and a rendezvous point can be checked for soundness in polynomial time.

Second, we consider a model in which the original program uses locks for synchronization, and the abstraction level at which the semantics of locking is visible and respected. This adds a layer of complication to the problem because one has to argue that every behaviour of the original program, {\em that respects the locking semantics}, has an equivalent up-to-commutativity behaviour in the specified natural reduction. We show that checking the soundness of a natural reduction in this setup is {\sc coNP}-hard, no matter how much we limit the use of constructs. In other words, it is  {\sc coNP}-hard even if a natural reduction is specified using a single atomic block, and it is {\sc coNP}-hard even if it is specified only using a rendezvous point.
Consequently, while fully respecting the locking semantics is theoretically more precise,
it is likely too costly for practical application in deductive proof systems.

\ifextended
Proofs of our key results can be found in the appendix.
\else
An extended version of our paper, including proofs, is available online~\cite{cav26:arxiv}.%
\fi

\section{Parameterized Programs and Reductions}

We use finite sequences to represent behaviours (\emph{interleavings}) of concurrent programs.
$X^*$ denotes the set of all finite sequences~$\sigma$ over elements in a given set~$X$ (the free monoid generated by~$X$).
$\powerset{X}$ is the powerset of~$X$.
We call a function $\mu : X \to \powerset{Y^*}$ a \emph{morphism}
and extend it to a function~$\mu : X^*\to\powerset{Y^*}$ (with $\mu(\varepsilon)=\{\varepsilon\}$, $\mu(x\sigma)=\mu(x)\mu(\sigma)$ for $x\in X,\sigma\in X^*$)
and further to a function~$\mu: \powerset{X^*}\to\powerset{Y^*}$ (with $\mu(L)=\bigcup_{\sigma\in L}\mu(\sigma)$).
By $\tau|_{Z}$, we denote the projection of~$\tau$ to~$Z$
(the morphism with $z|_{Z}=\{z\}$ for $z\in X\cap Z$ and $x|_{Z}=\{\varepsilon\}$ for $x\in X\setminus Z$).
The expression~$|\sigma|$ denotes the length of~$\sigma$, and $|\sigma|_x$~is the number of occurrences of~$x$ in~$\sigma$.
Square brackets $[x,y,z]$ denote multisets.

\subsection{Parameterized Programs}
A \emph{parameterized program} is a concurrent program,
where, at runtime, $n$~threads all execute the same thread template
(the number~$n$ is the \emph{parameter}).
For the purpose of this paper, we model atomic actions in the thread template abstractly as elements~$a,b,c,\ldots$ of an infinite set~$\Actions$.

Given a set of actions~$A$, we write $A_\indexed$ for $A \times \N$, the set of indexed actions,
where an \emph{indexed action} $\langle a : i \rangle\in A_\indexed$ represents action~$a$ being executed by the $i$-th thread
(the thread executing an action may matter, e.g.\ if the action accesses thread-local variables).
We represent the interleavings of actions from different threads by \emph{indexed traces}~$\tau\in A_\indexed^*$,
and write $\langle \tau :i \rangle$ for $\tau\in A^*$ to denote the corresponding indexed trace where all actions are executed by thread~$i$.
By $\localTrace{\tau}{i}$, we denote the trace of all actions~$a$ where $\langle a :i\rangle$ appears in~$\tau$
(the morphism with $\localTrace{\langle a:i\rangle}{i}=\{a\}$ and $\localTrace{\langle a:j\rangle}{i} = \{\varepsilon\}$ for $j\neq i$).
We lift a morphism~$\mu:X\to\powerset{Y^*}$ to indexed actions by defining the morphism~$\liftIdx{\mu}$ with $\liftIdx{\mu}(\langle a:i\rangle)=\{\,\langle \tau:i\rangle \mid \tau\in\mu(a)\,\}$.

Though we generally abstract away from specific actions and their semantics, %
we need to model the semantics of certain actions for the purpose of \emph{synchronization} between the threads,
through locks, semaphores, rendez-vous etc.
\begin{definition}
  A \emph{synchronization alphabet} $(\S, \syncPred)$ consists of a set~$\S$ of synchronization actions (disjoint from~$\Actions$)
  and a predicate~$\syncPred$ over traces in~$(\Actions \cup \S)_\indexed^*$,
  where $\syncPred$ is preserved by all permutations of thread indices.%
\end{definition}
In particular, the \emph{lock synchronization alphabet} over the infinite set~$M$ of lock variables is given by $\S_\locks = \{\, \acq{m}, \rel{m} \mid m \in M \,\}$
and the predicate $\syncPred[\locks]$ such that for all $\tau\in(\Actions \cup \S_\locks)_\indexed^*$, we have $\sync[\locks]{\tau}$ if and only if
\begin{multline}
   \forall m\in M\,.\, \tau|_{\{\acq{m},\rel{m}\}_\indexed} \in \{\,\langle \acq{m}\rel{m} :i \rangle \mid i \in \N \,\}^* \\\null\cdot \big (\{\varepsilon\} \cup \{\,\langle \acq{m}:i\rangle \mid i \in \N \,\} \big)
\end{multline}
The \emph{trivial synchronization alphabet} is $(\emptyset,\syncPred[\top])$ where $\sync[\top]{\tau}$ holds for all~$\tau$.

The thread template of a parameterized program is a control flow graph.
\begin{definition}
Given a synchronization alphabet $(\S,\syncPred)$,
a \emph{control flow graph} $G=(\Loc, \Delta, \ell_\init, \ell_\exit)$  consists of
a finite set of locations~$\Loc$,
a finite transition relation $\Delta \subseteq \Loc \times (\Actions\cup\S) \times \Loc$,
an initial location~$\ell_\init \in \Loc$ and an exit location~$\ell_\exit\in\Loc$, with $\ell_\init \neq \ell_\exit$.
\end{definition}
We assume that every location is reachable from~$\ell_\init$,
and $\ell_\exit$~is reachable from every location.
Furthermore, for each~$a\in\Actions$, there exists at most one edge in~$\Delta$ labeled by~$a$
(this assumption is only for notational convenience).
We denote the set of actions occurring in~$G$ by~$\alphabet{G}$ (i.e., $\alphabet{G} \subseteq \Actions\cup\S$).
We write $\ell \xrightarrow{\tau}_\Delta \ell'$ for $\ell,\ell'\in\Loc$ and $\tau\in \alphabet{G}^*$ if there is a path from~$\ell$ to~$\ell'$ labeled by~$\tau$.
$\traces{G}$ is the language of all traces~$\tau\in \alphabet{G}^*$ such that $\ell_\init \xrightarrow{\tau}_\Delta \ell_\exit$ holds.
\begin{definition}
  A \emph{parameterized program} $P = ((\S,\syncPred), G)$ consists of
  a synchronization alphabet~$(\S,\syncPred)$
  and a control flow graph~$G$ (the \emph{thread template}).
\end{definition}
The interleavings of a program~$P = ((\S,\syncPred), G)$ are given by
\begin{multline}
  \lang{P} = \{\, \tau \in \Actions_\indexed^* \mid \exists \hat{\tau}\in(\Actions\cup\S)_\indexed^* \,.\, \tau = \hat{\tau}|_{\Actions_\indexed}
    \text{ and } \sync{\hat{\tau}}\\
    \text{ and } \forall i\in\N\,.\,
      \localTrace{\hat{\tau}}{i} \in \traces{G} \cup \{\varepsilon\}
     \,\}
\end{multline}
Every interleaving~$\tau$ corresponds to a trace~$\hat{\tau}\in(\Actions\cup\S)_\indexed^*$
which follows the synchronization discipline (e.g., no two threads hold the same lock at the same time),
and in which every thread follows a path from the initial to the exit location (or does not run at all).
In the interleaving~$\tau$ itself, the synchronization actions are dropped; their only purpose is to enforce the synchronization discipline.
As every thread must reach the exit location,
this language of interleavings is suitable for the verification of \emph{postconditions}.

\subsection{Reductions}
The central idea behind \emph{commutativity} and \emph{Mazurkiewicz equivalence}~\cite{mazurkiewicz:trace-theory} is that certain actions (of different threads) in an interleaving can be swapped without changing its behaviour.
This idea generalizes to \emph{semi-commutativity}~\cite{clerbout:semi-commutativity}, in which some pairs of actions can be swapped in one direction (yielding a superset of behaviours) but not vice versa.
As a result, Mazurkiewicz equivalence degenerates to a preorder (the relation is not necessarily symmetric).

For a program~$P=((\S,\syncPred),G)$,
we presume a given (semi-)commutativity relation $I\subseteq (\alphabet{G}\cap\Actions)^2$, not necessarily symmetric,
which defines the actions that may be swapped if performed by different threads
(actions from the same thread can never be reordered).
Formally, the \emph{covering preorder}~${\coveredby{I}}$ is the smallest reflexive-transitive relation over $\Actions_\indexed^*$
with $\rho \,\langle a:i\rangle\langle b:j\rangle \,\sigma \coveredby{I} \rho\, \langle b:j\rangle\langle a:i\rangle\, \sigma$ for all $\rho,\sigma\in\Actions_\indexed^*$, $(a,b)\in I$ and $i\neq j\in\N$.
\begin{definition}
Given $L_1,L_2 \subseteq \Actions_\indexed^*$,
$L_1$~is a \emph{Mazurkiewicz reduction} of~$L_2$
if $L_1 \subseteq L_2$ holds,
and for all $\tau\in L_2$,
there exists some $\tau'\in L_1$ with $\tau \coveredby{I} \tau'$.
\end{definition}

Note that $I$~is a relation over actions in~$\Actions$ only; we do not consider commutativity of synchronization actions in~$\S$.
The reason is that $I$~is an abstraction of the concrete semantics of actions (which we do not model).
In particular, it can be thought of as the information which actions can be swapped without affecting the semantics.
By contrast, we explicitly model the semantics of synchronization actions with the predicate~$\syncPred$,
and only consider interleavings that obey the semantics.
Hence, there is no need to abstract this semantics with~$I$.

\section{Natural Reductions}
This section formally defines our notion of \emph{natural reductions}
based on the syntactic introduction of \emph{atomic blocks} and \emph{global rendez-vous points} in a thread template.
The definitions here describe the permissible syntactic changes;
they do not guarantee by construction that the reduced program soundly represents the original program.
Rather, our definitions specify permissible inputs for the problem of \emph{deciding} the soundness.
We discuss the corresponding decision problems (as well as concrete algorithms) in the subsequent sections.

In the following, and for the remainder of the paper, we fix a parameterized program~$P=((\S,\syncPred), G)$
and a commutativity relation~$I \subseteq (\alphabet{G}\cap \Actions)^2$.

\subsection{Atomic Blocks}
The following definition captures the syntactic introduction of an atomic block in our control flow graph-based formalism:
\begin{definition}
  An \emph{atomic fusion} $\F = (G', (\beta_1, G_{\beta_1}), \ldots, (\beta_n, G_{\beta_n}))$ for~$P$ consists of a control flow graph~$G'$,
  distinct actions~$\beta_1,\ldots,\beta_n\in\alphabet{G'}$,
  and control flow graphs $G_{\beta_1},\ldots,G_{\beta_n}$,
  where for all $k=1,\ldots,n$, we have
  \begin{itemize}
    \item $\alphabet{G_{\beta_k}} \subseteq \Actions\setminus\{\beta_1,\ldots,\beta_n\}$, i.e., $G_{\beta_k}$ contains neither synchronization actions nor any~$\beta_{k'}$,
    \item $\traces{G_{\beta_k}} \neq \emptyset$,\,i.e.,\,$G_{\beta_k}$\,has at least one path from the initial to the exit location,
    \item $G=G'[\beta_1\mapsto G_{\beta_1},\ldots,\beta_n\mapsto G_{\beta_n}]$, i.e., $G$ can be derived from~$G'$ by inserting each~$G_{\beta_k}$ in place of the (unique) edge labeled by the respective~$\beta_k$.
  \end{itemize}
\end{definition}
In the definition above, $G'$ is a new thread template.
Each \emph{block symbol}~$\beta_k$ represents an (atomic) execution of an atomic block,
whereas $G_{\beta_k}$ describes the possible paths through the body of the atomic block.
There may be multiple paths through an atomic block, due to branching and loops.

\begin{corollary}
  Let $\F = (G',(\beta_1, G_{\beta_1}), \ldots, (\beta_n, G_{\beta_n}))$ be an atomic fusion for~$P$,
  and $\mu_\F: \Actions\cup\S \to \powerset{(\Actions \cup \S)^*}$ the morphism with $\mu_\F(\beta_k) = \traces{G_{\beta_k}}$ for all~$k$,
  and $\mu_\F(a) =\{a\}$ for all other $a\in\Actions\cup\S$.
  It holds that $\traces{G} = \mu_\F(\traces{G'})$.
\end{corollary}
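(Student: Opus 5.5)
We show $\traces{G} = \mu_\F(\traces{G'})$ by making the substitution $G = G'[\beta_1\mapsto G_{\beta_1},\ldots,\beta_n\mapsto G_{\beta_n}]$ concrete and then relating paths in~$G$ to paths in~$G'$. Concretely, $G$ is obtained from~$G'$ as follows. For each~$k$, remove the unique edge $(\ell_k,\beta_k,\ell_k')$ of~$G'$. Add a disjoint copy of the locations of~$G_{\beta_k}$. Identify the copy of the initial location of~$G_{\beta_k}$ with~$\ell_k$, and the copy of its exit location with~$\ell_k'$. All other edges of~$G'$ are kept unchanged. The locations of $G'$ embed into those of~$G$, and the fresh locations are exactly the internal locations of the copies.

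\emph{Inclusion $\supseteq$.} Take $\tau' = x_1\cdots x_m \in \traces{G'}$ with an accepting path $\ell_0 \xrightarrow{x_1} \ell_1 \cdots \xrightarrow{x_m} \ell_m$. An element of $\mu_\F(\tau')$ has the form $w_1\cdots w_m$, where $w_j\in\mu_\F(x_j)$.
\begin{itemize}
  \item If $x_j$ is not a block symbol, then $w_j = x_j$ and the edge $\ell_{j-1}\xrightarrow{x_j}\ell_j$ is kept in~$G$.
  \item If $x_j=\beta_k$, then $\ell_{j-1}=\ell_k$ and $\ell_j=\ell_k'$ because the $\beta_k$-edge is unique. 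Since $w_j\in\traces{G_{\beta_k}}$, the copy of~$G_{\beta_k}$ contains a path $\ell_k \xrightarrow{w_j}_\Delta \ell_k'$.
\end{itemize}
Concatenating these pieces gives a path in~$G$ from the initial to the exit location labelled $w_1\cdots w_m$.

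\emph{Inclusion $\subseteq$.} Take an accepting path~$\pi$ of~$G$ labelled~$\tau$. Split~$\pi$ at every visit to a location of~$G'$. Internal locations of a copy of~$G_{\beta_k}$ are connected only to locations of that same copy, and the copy touches the rest of~$G$ only at $\ell_k$ and~$\ell_k'$. Hence every maximal segment of~$\pi$ through internal locations of a copy is entered from~$\ell_k$ and exits to~$\ell_k'$. Its label is therefore a trace of~$G_{\beta_k}$, and we replace the segment by the single edge~$\beta_k$.

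One subtlety remains: an edge of the copy may go directly between the identified endpoints $\ell_k$ and~$\ell_k'$, with no internal location in between. Such an edge is a one-edge path of~$G_{\beta_k}$ from its initial to its exit location, so it can also be replaced by~$\beta_k$. We attribute every edge of~$\pi$ to its origin, either an edge of~$G'$ or an edge of some copy, so this replacement is unambiguous. Identified endpoints of different blocks could in principle interfere, but this attribution keeps the segmentation well-defined. The labels make this clean: copy edges carry actions of~$G_{\beta_k}$, which are not block symbols. (If the same action occurred both in~$G'$ and inside a block, the edge-uniqueness assumption on~$G$ would be violated, so each label also determines the origin of its edge.)

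After all replacements we obtain an accepting path of~$G'$ labelled by some~$\tau'$ with $\tau\in\mu_\F(\tau')$. This proves the inclusion.

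The main obstacle is precisely this decomposition step, i.e., making the segmentation rigorous. I would handle it by induction on the length of~$\pi$, keeping track of the origin of each edge. The rest of the argument is routine.
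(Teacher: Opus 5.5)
Your inclusion $\supseteq$ is fine. The inclusion $\subseteq$, however, has a genuine gap at the sentence ``Hence every maximal segment \ldots\ is entered from $\ell_k$ and exits to $\ell_k'$.'' The facts you cite are that internal locations of the copy are adjacent only to the copy, and that the copy meets the rest of $G$ only at $\ell_k,\ell_k'$. These show only that such a segment starts and ends somewhere in $\{\ell_k,\ell_k'\}$. Nothing in the definitions of control flow graphs or atomic fusions forbids edges of $G_{\beta_k}$ into its initial location or out of its exit location. For your gluing construction the inclusion is then false, so no induction on $\pi$ can repair the step.

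Here is a counterexample. Take $G_\beta$ with edges $p_0\xrightarrow{a}q\xrightarrow{b}p_0\xrightarrow{c}p_1$ (initial $p_0$, exit $p_1$), so $\traces{G_\beta}=(ab)^*c$. Take $G'$ with edges $\ell_0\xrightarrow{\beta}\ell_1$ and $\ell_0\xrightarrow{d}\ell_1$. Gluing yields $abd\in\traces{G}$, whereas $\mu_\F(\traces{G'})=(ab)^*c\cup\{d\}$. All standing assumptions hold here, including reachability and label uniqueness.

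Your list of subtleties also misses copy edges $\ell_k\to\ell_k$, $\ell_k'\to\ell_k$ and $\ell_k'\to\ell_k'$. Moreover, splitting at every visit to a location of $G'$ is the wrong segmentation even where the identity does hold. Delete the $d$-edge above: the path for $abc$ passes through $\ell_0$ midway, and your rule cuts out the segment $ab$ from $\ell_0$ to $\ell_0$, which is not a trace of $G_\beta$.

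The missing ingredient is a convention that makes the insertion faithful. Each $G_{\beta_k}$ should have an initial location without incoming edges and an exit location without outgoing edges. You can achieve this by normalizing $G_{\beta_k}$ with a fresh initial and a fresh exit location before gluing. This costs the label-uniqueness convention, which the paper calls merely notational.

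Given that convention, group the edges of $\pi$ into maximal runs of copy-$k$ edges, using edge origins as you propose. Cut each run right after every arrival at the body's exit. Each piece then begins with an out-edge of the body's initial location at $\ell_k$ and is an initial-to-exit path of $G_{\beta_k}$, so it can be replaced by $\beta_k$. This also covers a self-loop $\beta_k$-edge with $\ell_k=\ell_k'$.

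The paper gives no proof here. It states the identity as immediate from the definition of $G'[\beta_1\mapsto G_{\beta_1},\ldots,\beta_n\mapsto G_{\beta_n}]$, tacitly reading ``inserting $G_{\beta_k}$ in place of the edge'' in this well-behaved sense. A rigorous version of your argument has to make that reading explicit.
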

We call~$\mu_\F$ the \emph{fusion morphism} and make use of it throughout the paper.
\smallskip

An atomic fusion induces a new program~$\fuseblocks{P}{\F} := ((\S,\syncPred), G')$.
The interleavings of~$\fuseblocks{P}{\F}$ correspond to interleavings of~$P$ where the atomic block is executed without interruption by other threads
(the precise path through the atomic block is abstracted by a block symbol~$\beta_k$).
Hence, $\liftIdx{\mu_\F}(\lang{\fuseblocks{P}{\F}})$ always contains a subset of the interleavings in~$\lang{P}$.
However, it is not a~priori clear that this subset of interleavings is representative, i.e., forms a sound reduction.
\begin{definition}
  An atomic fusion~$\F$ of~$P$ is \emph{sound} (wrt.~$I$)
  if $\liftIdx{\mu_\F}(\lang{\fuseblocks{P}{\F}})$ is a Mazurkiewicz reduction (up to~$I$) of~$\lang{P}$.
\end{definition}
\Cref{sec:check-atomic-blocks} discusses the problem of \emph{deciding} if an atomic fusion is sound.

\subsection{Synchronous Reductions}
To support synchronous reductions, that for instance align multiple \emph{rounds} performed by threads,
we introduce a notion of \emph{global rendez-vous points}, or \emph{sync-points} for short.
When a thread encounters a sync-point, it waits until all other threads also reached a sync-point.
Only then, all threads can pass the sync-point together;
afterwards, each thread continues with its respective computation.
The only exception is that a thread that has already reached its exit location need not participate in sync-points (the still-running threads can pass their sync-points without it).

We formalize sync-points as a synchronization alphabet $(\S_\barrier, \syncPred[\barrier])$,
with a unique \emph{sync-point symbol}~$\barrier$ (i.e., $\S_\barrier = \{\barrier\}$).
To define the synchronization predicate, given a finite set $T=\{t_1,\ldots,t_n\}\subseteq \N$,
we let $\ind{\barrier}{T}$ denote the sequence $\ind{\barrier}{t_1}\ldots\ind{\barrier}{t_n}$.
We then define inductively, for all finite sets~$T\subseteq\N$:
\begin{align}
  \sync[\barrier,T]{\tau} &\null \iff \begin{aligned}[t]
    \tau=\tau_1\tau_2 \land\null &\tau_1\in\big((\Actions\times T)^*+\ind{\barrier}{T}\big)^* \\
                 \null\land\null &\exists T'\subsetneq T\,.\, \sync[\barrier,T']{\tau_2}
  \end{aligned}\\
  \sync[\barrier]{\tau}   &\null \iff \exists T\subseteq_{\mathsf{fin}}\N\,.\, \sync[\barrier,T]{\tau}
  \label{eq:sync-barrier}
\end{align}
The definition of $\syncPred[\barrier]$ states that whenever some set of threads~$T$ are running,
they can only pass a sync-point~$\barrier$ together (as in the sequence $\ind{\barrier}{T}$).
At any point, some threads may terminate (i.e., not perform any further actions);
from thereon only the remaining threads~$T'$ need to synchronize on sync-points.

As sync-points may be added to programs that already use other synchronization actions,
we note that synchronization alphabets $(\S_1, \syncPred[1]), (\S_2,\syncPred[2])$,
with $\S_1 \cap \S_2 =\emptyset$,
can be combined into a joint synchronization alphabet $(\S_1,\syncPred[1]) \oplus (\S_2,\syncPred[2]) := (\S_1\cup\S_2, \syncPred[12])$,
where $\sync[12]{\tau}$ holds if and only if $\sync[1]{\tau|_{(\Sigma \cup \S_1)_\indexed}}$ and $\sync[2]{\tau|_{(\Sigma\cup\S_2)_\indexed}}$ hold.
We define sync-point instrumentation as follows:
\begin{definition}
  \label{def:barrier-instr}
  A \emph{sync-point instrumentation} of~$P$ is a control flow graph~$G'$ over $\Actions \cup \S \cup \S_\barrier$
  with $\traces{G} = \traces{G'}|_{\Actions \cup \S}$,
  and for all $\tau_1,\tau_2\in \traces{G'}$, we have that $\tau_1|_\Actions = \tau_2|_\Actions$ implies $\tau_1=\tau_2$.
\end{definition}
A sync-point instrumentation is thus a modified thread template that inserts the sync-point symbol~$\barrier$ at various locations.
The injectivity condition for projection means that for every~$\tau\in \traces{G}$,
there is a \emph{unique} way to insert sync-points in it and derive a trace in~$\traces{G'}$.
One way to ensure this injectivity syntactically
is to construct~$G'$ from~$G=(\Loc,\Delta,\ell_\init,\ell_\exit)$ by inserting sync-points at a set of locations~$M\subseteq \Loc$.
I.e., we define $G'=(\Loc\uplus\{\hat{\ell}\mid\ell\in M\}, \Delta',\ell_\init,\ell_\exit)$
with
\begin{multline*}
\Delta' = \{\,(\ell,a,\ell')\in\Delta \mid \ell\notin M\,\} \cup \{\,(\ell,\barrier,\hat{\ell}) \mid \ell\in M\,\} \\\null\cup \{\,(\hat{\ell},a,\ell') \mid \ell\in M \land (\ell,a,\ell')\in\Delta \,\}\ .
\end{multline*}
We add a copy~$\hat{\ell}$ of every location~$\ell\in M$,
such that $\ell$~retains its incoming edges,
$\ell$ and $\hat{\ell}$ are connected by a sync-point,
and all outgoing edges of~$\ell$ are moved to~$\hat{\ell}$ (i.e., after the sync-point).
This syntactic sync-point insertion at locations~$M$ is sufficient to specify reductions e.g.\ for distributed protocols based on \emph{rounds},
as the point where a thread advances from one round to another can be identified syntactically in the code.
\begin{observation}
  Any~$G'$ constructed as described above is a sync-point instrumentation of~$P$.
\end{observation}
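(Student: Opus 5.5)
We need to check the two conditions of \cref{def:barrier-instr} for the graph~$G'$ obtained by inserting sync-points at the locations in~$M$: first, $\traces{G} = \traces{G'}|_{\Actions\cup\S}$, and second, injectivity of the projection to~$\Actions$ on~$\traces{G'}$. The plan is to relate paths in~$G$ and paths in~$G'$ through an explicit translation and then read off both conditions.

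For the language equality, we define a map on paths. Take a path in~$G$ from~$\ell_\init$ to~$\ell_\exit$ with edges $(\ell_0,a_1,\ell_1),\ldots,(\ell_{k-1},a_k,\ell_k)$. Replace each edge $(\ell_{j-1},a_j,\ell_j)$ whose source~$\ell_{j-1}$ lies in~$M$ by the two edges $(\ell_{j-1},\barrier,\hat{\ell}_{j-1})$ and $(\hat{\ell}_{j-1},a_j,\ell_j)$. Keep all other edges unchanged, since they belong to~$\Delta'$ because their source is not in~$M$. The result is a path in~$G'$ from~$\ell_\init$ to~$\ell_\exit$. Note that $\ell_\exit$ is a location of~$G'$ even if $\ell_\exit\in M$; the path simply stops at~$\ell_\exit$. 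Its label projects to $a_1\cdots a_k$ under $|_{\Actions\cup\S}$, which gives $\traces{G}\subseteq\traces{G'}|_{\Actions\cup\S}$. Conversely, consider any path in~$G'$ from~$\ell_\init$ to~$\ell_\exit$. Every hatted location~$\hat{\ell}$ has exactly one incoming edge, $(\ell,\barrier,\hat{\ell})$, and only outgoing edges that come from edges of~$\Delta$ leaving~$\ell$. Moreover, neither endpoint of the path is hatted. So every $\barrier$-edge is immediately followed by such an edge, and contracting each such pair gives a path in~$G$ with the same projection. This yields the reverse inclusion.

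For injectivity, the key step is a locality argument. Let $\tau\in\traces{G'}$ and let $\pi$ be the corresponding path. Then $\barrier$ occurs in~$\tau$ exactly before each non-$\barrier$ edge whose underlying $G$-source lies in~$M$. So $\tau$ is determined by the projected sequence together with the sequence of source locations of its non-$\barrier$ edges. We must then show that this location sequence is determined by the projection. Here I would argue by induction along the trace, using the paper's convention that each $a\in\Actions$ labels at most one edge of~$\Delta$. An action $a\in\Actions$ therefore fixes its $G$-source location, and hence whether a $\barrier$ precedes it.

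The main obstacle is that the statement projects to~$\Actions$ only, while synchronization actions in~$\S$ (e.g.\ $\acq{m}$) may label several edges. Two $G$-paths could then differ only in $\S$-labelled edges from different sources, some in~$M$ and some not. I would handle this by noting that the uniqueness convention is intended to make the path recoverable from the trace. If it is needed, I would strengthen the convention to all labels, or equivalently read injectivity relative to $|_{\Actions\cup\S}$. Under this convention the induction determines the sources of all edges, and both conditions of \cref{def:barrier-instr} follow.
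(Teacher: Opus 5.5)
The paper states this observation without proof, so your argument stands on its own. The language condition is handled correctly. Hatted locations are entered only through their $\barrier$-edge and are never endpoints of an accepting path. Hence expanding and contracting the pairs $(\ell,\barrier,\hat{\ell})(\hat{\ell},a,\ell')$ translates accepting paths in both directions without changing the projection to $\Actions\cup\S$. Your injectivity argument is complete for the trivial synchronization alphabet. There $|_{\Actions}$ and $|_{\Actions\cup\S}$ agree on $\traces{G'}$, and each $a\in\Actions$ fixes its $G$-source, hence whether a $\barrier$ precedes it. This is also the only setting in which the paper actually uses injectivity (\cref{conj:barrier-sound-pho}).

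The gap is in how you dispose of $\S\neq\emptyset$. There the statement, read literally, is false. Your two remedies are neither equivalent nor individually sufficient, so your closing claim that "both conditions of \cref{def:barrier-instr} follow" under the strengthened convention is wrong.

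Strengthening the convention to all labels does not give injectivity of $|_{\Actions}$. Let $G$ consist of the paths $\ell_\init\xrightarrow{\acq{m}}\ell_1\xrightarrow{\rel{m}}\ell_\exit$ and $\ell_\init\xrightarrow{\acq{m'}}\ell_2\xrightarrow{\rel{m'}}\ell_\exit$, and take $M=\emptyset$. Every label lies on a single edge and $G'=G$, yet the distinct traces $\acq{m}\rel{m}$ and $\acq{m'}\rel{m'}$ both project to $\varepsilon$. The collapse already happens in $\traces{G}$, before any $\barrier$ is inserted, so no reasoning about sync-point placement can repair it. The lock construction in \cref{sec:soundness-with-locks}, which branches over $j$ using only lock actions, violates the literal condition in the same way once $n\geq 3$.

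Conversely, switching to $|_{\Actions\cup\S}$ without the strengthened convention fails as well. Take edges $(\ell_\init,\acq{m},\ell_1)$, $(\ell_\init,\acq{m},\ell_2)$, $(\ell_1,\rel{m},\ell_\exit)$, $(\ell_2,\rel{m},\ell_\exit)$ and $M=\{\ell_1\}$. The distinct traces $\acq{m}\barrier\rel{m}$ and $\acq{m}\rel{m}$ of $G'$ have the same projection.

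What you can actually prove is the following. First, $|_{\Actions\cup\S}$ is injective on $\traces{G'}$ whenever every trace of $G$ has a unique accepting path, for example under your strengthened convention. This is the reading suggested by the paper's gloss ``a unique way to insert sync-points''. Second, the literal $|_{\Actions}$-injectivity holds if, in addition, $|_{\Actions}$ is injective on $\traces{G}$, which is the case in particular for $\S=\emptyset$. A correct write-up should present a counterexample for general $\S$ and then prove one of these corrected versions, rather than treating the issue as a matter of convention.
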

A given sync-point instrumentation~$G'$ induces a new program~$\addbarriers{P}{G'}:=((\S_{+\barrier}, \syncPred[+\barrier]), G')$ with the combined synchronization alphabet $(\S_{+\barrier},\syncPred[+\barrier]) := (\S, \syncPred) \oplus (\S_\barrier, \syncPred[\barrier])$.
The additional synchronization enforced by the sync-points yields a subset of interleavings: $\lang{\addbarriers{P}{G'}}\subseteq \lang{P}$.
\begin{definition}
  A sync-point instrumentation~$G'$ of~$P$ is \emph{sound} (with respect to~$I$)
  if $\lang{\addbarriers{P}{G'}}$ is a Mazurkiewicz reduction (up to~$I$) of $\lang{P}$.
\end{definition}
\Cref{sec:check-barriers} discusses the problem of \emph{deciding} whether a given sync-point instrumentation is sound.

\subsection{Combining Atomic Blocks and Synchronous Reductions}
We combine atomic blocks and synchronous reductions in a straightforward manner
in order to define the class of~\emph{natural reductions}.
\begin{definition}
  A \emph{natural reduction specification}~$(\F, G')$ for~$P$
  consists of an atomic fusion~$\F$ of~$P$
  and a sync-point instrumentation~$G'$ of~$\fuseblocks{P}{\F}$.
\end{definition}
Note that by this definition,
atomic blocks may not contain sync-points.
There would be no point for a thread to wait on all other threads at a sync-point,
while simultaneously preventing other threads from making progress and reaching sync-points (as that would interrupt the atomic block).

The notion of soundness follows directly from soundness of atomic fusions and sync-point instrumentations:
\begin{definition}
  A natural reduction specification~$(\F,G')$ for~$P$ is \emph{sound} (wrt.~$I$)
  if $\liftIdx{\mu_\F}(\lang{\addbarriers{\fuseblocks{P}{\F}}{G'}})$ is a Mazurkiewicz reduction (up to~$I$) of~$\lang{P}$.
\end{definition}

\section{Deciding the Soundness Problem}
\label{sec:deciding-soundness}

We first consider the problems of deciding soundness for atomic fusions and sync-point instrumentations separately,
in~\cref{sec:check-atomic-blocks} and \cref{sec:check-barriers}, respectively.
\Cref{sec:check-natural-reductions} then proposes a combined algorithm for deciding soundness of natural reduction specifications.

\subsection{Deciding Soundness of Atomic Blocks}
\label{sec:check-atomic-blocks}
We turn to the problem of deciding whether a given atomic fusion is sound.
The complexity of this problem, like many problems over parameterized programs, depends crucially on the mechanisms allowed for synchronization between threads~\cite{esparza:keeping-crowd-safe}.
In fact, we link its complexity to the \emph{coverability problem} for configurations of arbitrary width.

We begin by examining the notion of a \emph{sound} atomic fusion~$\F$ of~$P$ more closely.
In essence, it states that every interleaving of the original program~$P$
is covered by some interleaving in which the actions inside an atomic block execute without interruption by other threads.
To ensure this, we must check that each action that may be interleaved inside the atomic block
can be swapped either before all actions of the atomic block (it \emph{escapes to the left}) or after all actions of the atomic block (it \emph{escapes to the right}).%
\footnote{%
  This is a shift from the classical perspective~\cite{lipton75:movers}, which reasons about \emph{moving} the actions \emph{of the atomic block}.
  We discuss limitations of that perspective in~\cref{sec:mover-incomplete}.
}
\begin{figure}[t]
\begin{subfigure}{0.38\textwidth}
\centering
\begin{tikzpicture}[baseline=(b),scale=0.85]
  \node[] (bl) {$\ind{... z_i ...}{t}$};
  \node[right=1mm of bl,inner sep=0] (b) {$\ind{b}{t'}$};
  \node[right=1mm of b] (br) {$\ind{... z_{j} ...}{t}$};
  \path[draw] (bl.north east) -- (bl.north west) -- (bl.south west) -- (bl.south east) decorate[decoration={zigzag,segment length=0.7mm,amplitude=0.3mm}]{ -- (bl.north east)};
  \path[draw] (br.north west) -- (br.north east) -- (br.south east) -- (br.south west) decorate[decoration={zigzag,segment length=0.7mm,amplitude=0.3mm}]{ -- (br.north west)};

  \draw[thick,densely dashed,blue]
    (bl.north west) ++(left:2mm) edge[<-] ++(up:5mm)
    ++(up:5mm) -| node[pos=0.175,red](bl-b){\scriptsize\XSolidBold} (b.north);
  \draw[densely dotted,red,thick] (bl.north) ++(left:2mm) |- (bl-b.center);

  \draw[thick,densely dashed,blue]
    (br.south east) ++(right:2mm) edge[<-] ++(down:5mm)
    ++(down:5mm) -| node[pos=0.245,red](br-b){\scriptsize\XSolidBold} (b.south);
  \draw[densely dotted,red,thick] (br.south) ++(left:2mm) |- (br-b.center);
\end{tikzpicture}
\caption{$(z_i,b)\notin I,(b,z_j)\notin I$.}
\label{fig:escape-simple}
\end{subfigure}
\hfill
\begin{subfigure}{0.65\textwidth}
\centering
\begin{tikzpicture}[baseline=(a),scale=0.85]
  \node[] (bl) {$\ind{... z_i ...}{t}$};
  \node[right=1mm of bl,inner sep=0] (a) {$\ind{a}{t_1}$};
  \node[right=0mm of a,inner sep=0]  (b) {$\ind{b}{t_1}$};
  \node[right=0mm of b,inner sep=0]  (c) {$\ind{c}{t_2}$};
  \node[right=1mm of c] (br) {$\ind{... z_{j} ...}{t}$};
  \path[draw] (bl.north east) -- (bl.north west) -- (bl.south west) -- (bl.south east) decorate[decoration={zigzag,segment length=0.7mm,amplitude=0.3mm}]{ -- (bl.north east)};
  \path[draw] (br.north west) -- (br.north east) -- (br.south east) -- (br.south west) decorate[decoration={zigzag,segment length=0.7mm,amplitude=0.3mm}]{ -- (br.north west)};

  \draw[thick,densely dashed,blue]
    (bl.north west) ++(left:2mm) edge[<-] ++(up:3mm)
    ++(up:3mm) -| node[pos=0.165,red](bl-a){\scriptsize\XSolidBold} (a.north);
  \draw[densely dotted,red,thick] (bl.north) ++(left:2.5mm) |- (bl-a.center);

  \draw[thick,densely dashed,blue]
    (br.south east) ++(right:2mm) edge[<-] ++(down:3mm)
    ++(down:3mm) -| node[pos=0.25,red](br-c){\scriptsize\XSolidBold} (c.south);
  \draw[densely dotted,red,thick] (br.south) ++(left:2.5mm) |- (br-c.center);

  \draw[thick,densely dashed,blue]
    (bl.north west) ++(left:4mm) edge[<-] ++(up:5mm)
    ++(up:5mm) -| node[pos=0.39,red](a-b){\scriptsize\XSolidBold} (b.north);
  \draw[densely dotted,red,thick] (a.north) ++(right:2mm) |- (a-b.center);

  \draw[thick,densely dashed,blue]
    (br.south east) ++(right:4mm) edge[<-] ++(down:5mm)
    ++(down:5mm) -| node[pos=0.4025,red](b-c){\scriptsize\XSolidBold} (b.south);
  \draw[densely dotted,red,thick] (c.south) ++(left:3mm) |- (b-c.center);
\end{tikzpicture}
\caption{$(z_i,a)\notin I, \progOrder{a}{b}, (b,c)\notin I, (c,z_j)\notin I$}
\label{fig:escape-complex}
\end{subfigure}
\caption{Action~$b$ cannot escape to the left/right of the atomic block $z_1\ldots z_m$.}
\label{fig:escape}
\end{figure}
There can be different obstacles that prevent an action from escaping an atomic block~$z_1\ldots z_m$,
thus causing the atomic fusion to be unsound.
In the simplest case, illustrated in~\cref{fig:escape-simple},
an action~$b$ does not commute to the left of some~$z_i$ nor to the right of some~$z_j$, with $i<j$.
Then, the kind of interleaving shown in~\cref{fig:escape-simple} has no representative where $z_1\ldots z_m$ executes atomically,
and the given atomic fusion is unsound.

However, the problem can be more complex.
For instance, in~\cref{fig:escape-complex},
$b$~commutes to the left of all $z_1,\ldots,z_m$,
but a previous action~$a$ of the same thread does not commute with~$z_i$.
As we cannot swap $\ind{b}{t_1}$ with $\ind{a}{t_1}$, in this scenario, $\ind{b}{t_1}$ cannot escape to the left.
On the other hand, $\ind{b}{t_1}$ cannot escape to the right either,
due to the presence of an action~$\ind{c}{t_2}$ such that $(b,c)\notin I$.
In order for $\ind{b}{t_1}$ to escape to the right, $\ind{c}{t_2}$ would also have to escape \emph{to the right}, but we have~$(c,z_j)\notin I$.

As shown here, even if individual actions interleaved in an atomic block can escape,
in general we have to reason about complex chains of dependencies (e.g., if $\ind{a}{t_1}$ can only escape to the right, then $\ind{b}{t_1}$ must also escape to the right, and hence the same applies to~$\ind{c}{t_2}$).
Whether such a chain of dependencies causes unsoundness however crucially depends on
whether the actions~$\ind{a}{t_1},\ind{b}{t_1},\ind{c}{t_2}$ can actually be interleaved inside the block in this manner,
or whether synchronization prevents this.
For instance, if actions~$b$ and~$c$ are protected by the same lock,
the issue above could not occur
($t_1$~and~$t_2$ cannot both hold the lock at the same time).
The soundness of atomic fusions is thus tightly linked to the question of \emph{coverability} under synchronization:
\begin{definition}[Coverability]
  Let $C=[\ell_1,\ldots,\ell_n]$ be a finite multiset of locations of~$P$.
  The configuration~$C$ is \emph{coverable} in~$P$
  if there exist a trace $\hat{\tau}\in(\alphabet{G}\cup\S)_\indexed^*$
  with $\sync{\hat{\tau}}$,
  and (distinct) $t_1,\ldots,t_n\in\N$ with $\ell_\init \xrightarrow{\localTrace{\hat{\tau}}{i}}_\Delta \ell_i$.
\end{definition}

\begin{theoremrep}
  \label{thm:atomic-blocks-coverability}
  For every synchronization alphabet,
  the coverability problem over programs with this synchronization alphabet
  is polynomial-time reducible to the problem of deciding whether a given atomic fusion of a given program  is unsound wrt.\ a given commutativity relation.
\end{theoremrep}
\begin{proofsketch}
  Given configuration $C=[\ell_1,\ldots,\ell_n]$,
  we let $n$~fresh actions $a_1,\ldots,a_n$ form a dependency chain with actions $b_1,b_2$ belonging to an atomic block
  (i.e., $(b_1,a_1)\notin I$, $(a_i,a_{i+1})\notin I$, $(a_n,b_2)\notin I$);
  all other actions commute.
  We insert the actions~$a_i$ at the corresponding locations~$\ell_i$, and add the atomic block~$b_1 b_2$ as a branch from the initial to the exit location.
  The atomic fusion is unsound if and only if a trace containing $\langle b_1:i_0 \rangle \langle a_1:i_1 \rangle \ldots \langle a_n:i_n\rangle \langle b_2:i_0\rangle$ is allowed by $\syncPred$,
  which holds if and only if $C$~is coverable.
\end{proofsketch}
\begin{proof}
  
Given a program~$P=((\S,\syncPred), G)$ and a configuration~$C=[\ell_1,\ldots,\ell_r]$,
we construct a program $P'=((\S,\syncPred), G')$, an atomic fusion $\F=(G'',(\beta,G_\beta))$ of~$P'$, and a commutativity relation~$I\subseteq \alphabet{G'}^2$
such that $\F$~is \emph{un}sound if and only if~$C$ is coverable in~$P$.

\paragraph{Construction.}
Let $a,b,c_1,\ldots,c_r$ be $r+2$ fresh statements (i.e., $a,b,c_1,\ldots,c_r\notin \alphabet{G}$),
and let $\Sigma' := \alphabet{G} \cup \{a,b,c_1,\ldots,c_r\}$.
We define a commutativity relation~$I \subseteq \Sigma' \times \Sigma'$,
by setting
\[
  I := (\Sigma' \times \Sigma') \setminus \{(a,c_1),(c_1,c_2),\ldots,(c_{r-1},c_r),(c_r,b)\}\ .
\]
Further, let $G'$ be the thread template $G$,
where
\begin{itemize}
\item two fresh locations $\ell_a$ and $\ell_{ab}$ have been added, and $\ell_{ab}$ is the exit location of~$G'$,
\item as well as edges $(\ell_\init, a, \ell_a)$, $(\ell_a, b, \ell_{ab})$,
\item and edges $(\ell_i,c_i,\ell_{ab})$ for all $i=1,\ldots,r$.
\end{itemize}
Finally, let $G''$ be the thread template $G'$,
where edges $(\ell_\init, a, \ell_a)$ and $(\ell_a, b, \ell_{ab})$ have been replaced by a single edge from $\ell_\init$ to $\ell_{ab}$ labeled by the atomic block symbol~$\beta$,
and let $G_\beta = (\{\ell_\init,\ell_a,\ell_{ab}\}, \{(\ell_\init, a, \ell_a),(\ell_a, b, \ell_{ab})\},\ell_\init, \ell_{ab})$.
Then $\F$~is indeed an atomic fusion of~$P'$.

\paragraph{Reasoning.}
We claim that $C$~is coverable in~$P$ if and only if $\F$~is unsound wrt.~$I$.
To see this,
assume that $C$~is coverable, and let $\hat{\tau}\in(\alphabet{G}\cup\S)^*$~be the corresponding trace as in the definition of coverability.
Then $\tau|_{\Actions_\indexed}\,a\,c_1\ldots c_r\,b$ is an interleaving of~$P'$,
which has no representative in~$\fuseblocks{P'}{\F}$.

Conversely, assume that $\F$ is unsound,
i.e., there exists some trace of~$P'$ that has no representative in~$\liftIdx{\mu_\F}(\lang{\fuseblocks{P}{\F}})$.
As almost all statements commute, this can only be possible if there exists a trace of the form $\tau' = \tau a\,c_1\ldots c_r\,b\sigma$ in~$\lang{P'}$.
Let $\hat{\tau}'\in (\alphabet{G}\cup\S)^*$ be the corresponding trace with $\sync{\hat{\tau}'}$, as in the definition of~$\lang{P'}$.
The prefix $\hat{\tau}''$ of $\hat{\tau}'$ with $\hat{\tau}''|_{\Actions_\indexed} = \tau$ witnesses the coverability of~$C$.
\qed

\end{proof}
In particular, the width of the configurations in the coverability problem corresponds to the size of simple cycles in the complement of~$I$.

As a corollary of~\cref{thm:atomic-blocks-coverability},
deciding soundness of atomic fusions for programs with powerful synchronization mechanisms such as \emph{broadcast} has non-elementary complexity~\cite{esparza:keeping-crowd-safe}.
Even for programs with simpler synchronization mechanisms, such as boolean variables,
the problem is {\sc coNP}-hard.
This motivates us to consider an abstract view of the program,
which ignores the semantics of such synchronization actions, and treats them like any other ordinary actions.
This yields an over-approximation of the program,
which contains some interleavings that would be ruled out by the synchronization mechanism.
Reasoning on this abstract view of the program allows us to \emph{certify} the soundness of atomic fusions (and later, of natural reductions generally):
any atomic fusion that is sound wrt.\ the abstract view is also sound wrt.\ a more concrete view.
A determination that an atomic fusion is \emph{unsound} does not transfer to the concrete level:
the perceived unsoundness may be an artifact of the abstraction.
Yet, the abstract view provides for an efficient and predictable decision procedure.
\medskip

Consequently, we assume for the remainder of this section that our program~$P$ uses the trivial synchronization alphabet~$(\emptyset,\syncPred[\top])$.
Furthermore, let $\F = (G', (\beta_1, G_{\beta_1}), \ldots, (\beta_n, G_{\beta_n}))$ be an atomic fusion of~$P$.
We present an algorithm to decide whether the given atomic fusion~$\F$ is sound wrt.~$I$.

The algorithm is based on detecting the different kinds of ``obstacles'' that might prevent an action interleaved inside an atomic block from escaping,
as discussed above (and illustrated in~\cref{fig:escape}).
\begin{toappendix}
For the purpose of the proof of~\cref{conj:polynomial-sound-atomic} below, we introduce an extended (infinite) alphabet $\Sigma' := \alphabet{G} \cup \{\,\beta_k^\tau \mid k=1,\ldots,n \text{ and } \tau\in\traces{G_{\beta_k}} \,\}$.
We define the commutativity relation~$I'\subseteq \Sigma'\times\Sigma'$ below, where $a,b$ range over~$\alphabet{G}$:
\begin{align*}
  I' := I
  &\null \cup \{\, (a, \beta_k^\tau) \mid \tau=z_1\ldots z_m \land \forall i=1,\ldots,m \,.\, (a, z_i) \in I \,\}\\
  &\null \cup \{\, (\beta_k^\tau, b) \mid \tau=z_1\ldots z_m \land \forall i=1,\ldots,m \,.\, (z_i, b) \in I \,\}\\
  &\null \cup \begin{aligned}[t]
    \{\, (\beta_{k_1}^{\tau_1}, \beta_{k_2}^{\tau_2}) \mid\null &\tau_1=z_1^1\ldots z_{m_1}^1, \tau_2=z_1^2\ldots z_{m_2}^2 \\
      &\null \land \forall i=1,\ldots,m_1\,.\, \forall j=1,\ldots,m_2\,.\, (z_i^1, z_j^2) \in I \,\}
    \end{aligned}
\end{align*}
\end{toappendix}
For $a,b\in\alphabet{G}$,
we write $\progOrder{a}{b}$ (\emph{program order}) if there exists a trace of the form $\rho\iota\sigma\in \traces{G}$, where $\iota$ begins with~$a$ and ends with~$b$
(possibly~$a=\iota=b$).
$\compl{I} := \alphabet{G}^2 \setminus I$ denotes the complement of~$I$.
\newcommand\atRel{\mathsf{at}}%
\newcommand\at[2]{(#1,#2)\in \atRel}%
We write $\at{a}{b}$ if there exists $\rho b \iota a \sigma\in\traces{G_{\beta_k}}$, for some $\beta_k$ (the reverse program order within atomic sections).
We define~${\rightarrowtail}$
as the relation $\compl{I} \comp \big( (\progOrderRel \cup \atRel) \comp \compl{I}\big)^+$,
where ${\comp}$ denotes relational composition.
The following result formalizes the idea that soundness of~$\F$
corresponds to the non-existence of the kind of dependence chains $\ind{z_i}{t},\ind{a}{t_1},\ind{b}{t_1},\ind{c}{t_2},\ind{z_j}{t}$
as in the example of~\cref{fig:escape-complex}.
\begin{toappendix}
  We say that an atomic block~$\beta_k$ occurs non-atomically in an interleaving~$\tau\in\Actions_\indexed^*$
  if there exists $z_1\ldots z_m\in\traces{G_{\beta_k}}$ and $i\in\N$
  such that $\localTrace{\tau}{i}=z_1\ldots z_m$, but $\ind{z_1\ldots z_m}{i}$ is not an infix in~$\tau$.
\end{toappendix}
\begin{propositionrep}
  \label{conj:polynomial-sound-atomic}
  The atomic fusion~$\F$ is sound
  if and only if
  there do not exist $z_1\ldots z_m\in \traces{G_{\beta_k}}$ (for some~$k$) and $i<j\in\{1,\ldots,m\}$
  such that
  $z_i \rightarrowtail z_j$ holds.
\end{propositionrep}
\begin{proofsketch}
  If $z_i\rightarrowtail z_j$ holds, we construct an interleaving in which $z_1\ldots z_m$ cannot be made atomic,
  except possibly by breaking apart another atomic block.
  Conversely, under the assumption that no $z_1\ldots z_m$ with $z_i\rightarrowtail z_j$ exists,
  we show that every interleaving can be transformed into one where all atomic blocks are represented by their respective block symbol~$\beta_k$.
\end{proofsketch}
\begin{proof}
  We show the two directions separately.
  \medskip

  \textbf{Completeness.}
  If there exist a trace $z_1\ldots z_m\in \traces{G_{\beta_k}}$ (for some~$k$) and indices $i<j\in\{1,\ldots,m\}$
  such that $z_i\rightarrowtail z_j$ holds,
  we construct an interleaving of the form~$\tau = \rho\,\ind{z_1\ldots z_i}{1}\,\iota\,\ind{z_{i+1}\ldots z_m}{1}\,\sigma \in \lang{P}$,
  where $\iota$~does not contain any steps of thread~$1$,
  such that $\liftIdx{\mu_\F}(\lang{\fuseblocks{P}{\F}})$ cannot contain any interleaving covering~$\tau$.
  The prefix~$\rho \in \Actions_\indexed^*$ brings each involved thread to a suitable starting point,
  and the suffix~$\sigma$~completes the execution of each thread.

  As we have $z_i \rightarrowtail z_j$,
  i.e., $(z_i,z_j)\in \compl{I} \circ \big((\progOrderRel\cup\atRel)\circ\compl{I}\big)^p$ for some $p$,
  there exist $a_1,b_1,\ldots,a_p,b_p\in\alphabet{G}$
  such that $(z_i,a_1)\notin I$,
  and for each $r=1,\ldots,p$,
  we have
  $\progOrder{a_r}{b_r}$ or $\at{a_r}{b_r}$,
  as well as $(b_r, a_{r+1})\notin I$ (where we set $a_{p+1}:=z_j$).

  For each $r=1,\ldots,p$,
  we define traces $\rho_r,\iota_r,\sigma_r$ as follows:
  \begin{itemize}
  \item If $\progOrder{a_r}{b_r}$ holds,
    then let $\rho_r\iota_r\sigma_r\in\traces{G}$ be the trace guaranteed by the definition of~$\progOrderRel$,
    where $\iota_r$ begins with $a_r$ and ends with $b_r$.
  \item If $\at{a_r}{b_r}$ holds,
    then let $\iota_r := \rho'b\iota'a\sigma'\in\traces{G_{\beta_k}}$ be the trace guaranteed by the definition of~$\atRel$.
    Let furthermore $\rho_r\, \rho'b\iota'a\sigma' \,\sigma_r \in \traces{G}$ (such a trace always exists).
  \end{itemize}
  Finally, let $\rho_0\, z_1\ldots z_m \,\sigma_0 \in \traces{G}$.
  Then, we define
  \begin{align*}
    \rho   &:= \ind{\rho_0}{1} \ind{\rho_1}{2} \ldots \ind{\rho_p}{p+1}\\
    \iota  &:= \ind{\iota_1}{2} \ldots \ind{\iota_p}{p+1}\\
    \sigma &:= \ind{\sigma_0}{1} \ind{\sigma_1}{2} \ldots \ind{\sigma_p}{p+1}
  \end{align*}
  For each $r=1,\ldots,p$,
  one of the following holds:
  \begin{itemize}
  \item We have $\progOrder{a_r}{b_r}$, and hence $\ind{a_r}{r+1}$ precedes $\ind{b_r}{r+1}$ in~$\iota$.
    Then, every covering interleaving must preserve the order between $\ind{a_r}{r+1}$, $\ind{b_r}{r+1}$ and $\ind{a_{r+1}}{(r+2) \bmod (p+1)}$,
    since $(b_r,a_{r+1})\notin I$.
  \item We have $\at{a_r}{b_r}$, and hence $\ind{b_r}{r+1}$ precedes $\ind{a_r}{r+1}$ in~$\iota$.
    Every covering interleaving preserves the order between $\ind{b_{r-1}}{r}$ (setting $b_0:=z_i$) and $\ind{a_r}{r+1}$,
    and between $\ind{b_r}{r+1}$ and $\ind{a_{r+1}}{(r+2)\bmod (p+1)}$.
    Hence, either the interleaving also preserves the order between $\ind{a_r}{r+1}$ and $\ind{a_{r+1}}{(r+2) \bmod (p+1)}$,
    or $\ind{a_{r+1}}{(r+2) \bmod (p+1)}$ occurs between $\ind{b_r}{r+1}$ and $\ind{a_r}{r+1}$.
    In the latter case, the atomic block containing $a_r,b_r$ occurs non-atomically in the interleaving.
  \end{itemize}

  We conclude that in all interleavings covering~$\tau$,
  either the atomic block $z_1\ldots z_m$ or some other atomic block occurs non-atomically.
  Hence, there does not exist an interleaving covering~$\tau$ in~$\liftIdx{\mu_\F}(\lang{\fuseblocks{P}{\F}})$,
  and the atomic fusion~$\F$ is unsound.

  \textbf{Soundness.}
  Assume that no $z_1\ldots z_m,i,j$ as described exist,
  and let $\tau$~be some interleaving in~$\lang{P}$.
  We show that $\tau$~can be transformed step-by-step,
  either by swapping commuting actions (up to the commutativity relation~$I'$ defined above) to derive a new interleaving covering~$\tau$,
  or by replacing a subsequence $\ind{z_1\ldots z_m}{t}$ with $z_1\ldots z_m\in\traces{G_{\beta_k}}$ with a special symbol $\ind{\beta_k^{z_1\ldots z_m}}{t}$,
  until there are no more non-atomic occurrences of atomic blocks in the final interleaving~$\tau^\star$.
  If we replace every~$\beta_k^\rho$ in~$\tau'$ by $\beta_k$, we derive an interleaving~$\hat{\tau}^\star\in\lang{\fuseblocks{P}{\F}}$.
  It follows that $\tau\in\liftIdx{\mu_\F}(\hat{\tau}^\star)$, and we are done.

  Towards this transformation,
  first replace every maximal atomically occurring subsequence $\langle z_1\ldots z_m:t \rangle$ with $z_1\ldots z_m\in\traces{G_{\beta_k}}$ (for some~$k$)
  by the corresponding~$\ind{\beta_k^{z_1\ldots z_m}}{t}$.
  The resulting interleaving~$\tau'$ is uniquely determined (every action can be uniquely associated with at most one~$G_{\beta_k}$, and $\traces{G_{\beta_k}}$ never contains an empty trace).
  It always holds that $\tau\in\liftIdx{\mu_\F}(\tau')$.
  If $\tau'\in\lang{\fuseblocks{P}{\F}}$, we are done.

  Otherwise, $\tau'$~must contain a subsequence $\sigma = \langle z_1\ldots z_i:t\rangle\ \iota\ \langle z_j\ldots z_m:t\rangle$,
  for some $z_1\ldots z_m\in\traces{G_{\beta_k}}$,
  where $\iota$~does not contain $\langle z_1:t \rangle$~nor~$\langle z_m:t\rangle$.
  I.e., the prefix $\langle z_1\ldots z_i:t\rangle$ and the suffix $\langle z_j\ldots z_m:t\rangle$ of~$\sigma$ belong to the same occurrence of the same atomic block.
  We have to show that every action~$\langle x:t'\rangle$ in~$\iota$ (where $t\neq t'$) can be moved to the left of~$\langle z_1:t\rangle$ resp.\ to the right of~$\langle z_m:t\rangle$.
  If we achieve this, we can once again replace the atomically occurring sequence $\langle z_1\ldots z_m:t\rangle$ by $\beta_k^{z_1\ldots z_m}$,
  and inductively conclude that any number of non-atomically occurring blocks in the original interleaving can be made atomic.

  Consider the first letter $\langle x:t'\rangle$ in $\iota$, i.e., let $\iota = \langle x:t'\rangle\ \iota'$.
  Wlog.\ we assume $t'\neq t$, as otherwise we can choose a larger $i$ and a shorter $\iota$.
  It must be the case that $(z_l,x)\in I'$ for all $l\leq i$, or $(x,z_r)\in I'$ for all $r > i$.
  Otherwise, we would immediately have
  $z_i \rightarrowtail z_j$:
  either because
  $(z_l,a)\notin I, \progOrder{a}{a}, (a,z_r)\notin I$ (in the case that $x=a\in\alphabet{G}$, noting that ${\progOrderRel}$ is reflexive),
  or (in the case that $x$ is some~$\beta_{k_x}^{\rho_x}$ with $\rho_x=a_1\ldots a_{m_x}$)
  $(z_l,a_i)\notin I$ and $(a_j,z_r)\notin I$, with $\progOrder{a_i}{a_j}$ if $j\geq i$ or $\at{a_i}{a_j}$ if $j<i$.
  This would contradict to our assumption.

  In the case that $(z_l,x)\in I'$ for all $l\leq i$, we see immediately that the sequence $\langle x:t'\rangle \langle z_1\ldots z_i:t\rangle\ \iota'\ \langle z_j\ldots z_m\rangle$ covers $\sigma$, and we are done.
  In the case that there exists $l\leq i$ with $(z_l,x)\notin I'$,
  we have $(x,z_r)\in I'$ for all $r > i$.
  We proceed by well-founded induction over the length of $\iota'$.
  For $\iota'= \varepsilon$, we are done immediately, as the sequence $\langle z_1\ldots z_i:t\rangle\ \langle z_j\ldots z_m\rangle\ \langle x:t'\rangle$ covers~$\sigma$ (and $j=i+1$).
  Otherwise, let $\iota' = \langle y:t'' \rangle\ \iota''$.

  If $t'\neq t''$ and $(x, y) \in I'$, then we swap $x$ and $y$, and continue to move~$x$ to the right of~$\iota''$,
  by applying the induction hypothesis to $\langle x:t'\rangle$ and $\iota''$,
  and to subsequently move~$x$ to the right of~$\ind{z_j\ldots z_n}{t}$.
  Otherwise, we have either $t'=t''$, or $t'\neq t''$ and $(x,y)\notin I'$.
  We distinguish four cases, and show in each that $(y,z_r)\in I'$ for all $r>i$.
  \begin{description}
  \item[Case 1: $x=a\in\alphabet{G},y=b\in\alphabet{G}$.]
    Either $t'=t''$, so we have $\progOrder{a}{b}$, or $(a,b)\notin I$.
    In either case, we must also have $(b,z_r)\in I\subseteq I'$ for all $r > i$,
    as otherwise we derive $z_l \rightarrowtail z_r$ (via $a$ and $b$),
    which contradicts our assumption.
  \item[Case 2: $x=\beta_{k_x}^{\rho_x}, y=b\in\alphabet{G}$.]
    Let $\rho_x=z_1^x \ldots z_{m_x}^x$, and let $i'$~be the index such that $(z_l,z_{i'}^x)\notin I$ (which exists, as $(z_l,x)\notin I'$).
    \begin{itemize}
    \item If $t'=t''$, we have $\progOrder{z_{i'}^x}{b}$.
      Hence, we must also have $(b,z_r)\in I\subseteq I'$ for all $r > i$:
      otherwise, we derive $z_l \rightarrowtail z_r$ (via $z_{i'}^x$ and $b$),
      which contradicts our assumption.
    \item Similarly, if $t'\neq t''$ and $(x,b)\notin I'$,
      taking~$j'$ as the index with $(z_{j'}^x,b)\notin I$,
      we have either $\progOrder{z_{i'}^x}{z_{j'}^x}$ (if $i'\leq j'$)
      or $\at{z_{i'}^x}{z_{j'}^x}$ (if $i'>j'$).
      In either case, we must also have $(b,z_r)\in I\subseteq I'$ for all $r > i$,
      to avoid deriving $z_l \rightarrowtail z_r$ (via $z_{i'}^x$ and $z_{j'}^x$), contradicting our assumption.
    \end{itemize}
  \item[Case 3: $x=a\in\alphabet{G}, y=\beta_{k_y}^{\rho_y}$.]
    Let $\rho_y=z_1^y\ldots z_{m_y}^y$.
    We show that for all $j'=1,\ldots,m_y$ and $r>i$,
    we have $(z_{j'}^y, z_r)\in I$.
    \begin{itemize}
    \item If $t'=t''$, we have $\progOrder{a}{z_{j'}^y}$.
      Then we must have $(z_{j'}^y,z_r)\in I$, otherwise we derive $z_l \rightarrowtail z_r$ (via $a$ and $z_{j'}^y$).
    \item If $t'\neq t''$ and $(a,y)\notin I'$, let $j''$ be the index with $(a,z_{j''}^y)\notin I$.
      We have either $\progOrder{z_{j''}^y}{z_{j'}^y}$ (if $j''\leq j'$) or $\at{z_{j''}^y}{z_{j'}^y}$ (if $j''<j'$).
      In both cases, we must have $(z_{j'}^y, z_r)\in I$,
      otherwise we can derive $z_l \rightarrowtail z_r$ (via $a$, $z_{j''}^y$ and $z_{j'}^y$).
    \end{itemize}
    It follows that $(y,z_r)\in I'$ for all $r>i$.
  \item[Case 4: $x=\beta_{k_x}^{\rho_x}, y=\beta_{k_y}^{\rho_y}$.]
    Let $\rho_x=z_1^x \ldots z_{m_x}^x$ and $\rho_y=z_1^y\ldots z_{m_y}^y$.
    Let $i'$~be the index such that $(z_l,z_{i'}^x)\notin I$ (which exists, as $(z_l,x)\notin I'$).
    \begin{itemize}
    \item If $t'=t''$, we have $\progOrder{z_{i'}^x}{z_{j'}^y}$ for all $j'=1,\ldots,m_y$.
      As before, to avoid deriving $z_l \rightarrowtail z_r$ (via $z_{i'}^x$ and $z_{j'}^y$),
      we must have $(z_{j'}^y,z_r)\in I$ for all $r>i$,
      and so $(y,z_r)\in I'$ for all $r>i$.
    \item If $t'\neq t''$ and $(x,y)\notin I'$, let $j_x,j_y$ be the indices such that $(z_{j_x}^x, z_{j_y}^y)\notin I$.
      We know that either $\progOrder{z_{i'}^x}{z_{i_x}^x}$ (if $i' \leq i_x$)
      or $\at{z_{i'}^x}{z_{i_x}^x}$ (if $i'>i_x$).
      For all $j'\geq j_y$, we have $\progOrder{z_{j_y}^y}{z_{j'}^y}$;
      whereas for all $j'< j_y$, we have $\at{z_{j_y}^y}{z_{j'}^y}$.
      In either case, we know that $(z_{j'}^y, z_r)\in I$ for all $r>i$,
      as we would otherwise derive $z_l\rightarrowtail z_r$ (via $z_{i'}^x$, $z_{i_x}^x$, $z_{i_y}^y$ and $z_{j'}^y$).
      Once again, it follows that $(y,z_r)\in I'$ for all $r>i$.
    \end{itemize}
  \end{description}
  Thus, we can apply our induction hypothesis to $\langle y:t'' \rangle$ and $\iota''$,
  allowing us to move $y$ to the right of~$\ind{z_m}{t}$,
  and yielding a covering trace with a segment $\langle z_1\ldots z_i:t\rangle\,\langle x:t'\rangle\, \iota''' \,\langle z_j\ldots z_m:t\rangle$.
  We apply the induction hypothesis again, this time to $\langle x:t'\rangle$ and $\iota'''$.
  This gives us a covering trace with a segment $\langle z_1\ldots z_i:t\rangle\,\iota'''' \,\langle z_j\ldots z_k:t\rangle$,
  in which we have successfully moved $\ind{x}{t'}$ out of the atomic block.

  This process is repeated, until $\ind{z_1\ldots z_m}{t}$ appears atomically,
  at which point we replace it by~$\beta_k^{z_1\ldots z_m}$.
  By induction over the number of atomic block occurrences that have not yet been replaced by some~$\beta_{k'}^{\rho}$,
  it follows that the transformation eventually yields an interleaving~$\tau^\star$ as described above.
  \qed
\end{proof}
Using \cref{conj:polynomial-sound-atomic},
the following algorithm decides soundness of~$\F$.
As the bodies~$G_{\beta_k}$ of atomic blocks may contain complex control flow including loops,
we cannot simply enumerate all possible~$z_1\ldots z_m\in\traces{G_{\beta_k}}$.
Hence, the algorithm reasons on the level of strongly connected components (SCCs) of~$G_{\beta_k}$.%
\begin{thmalgorithm}
\label{algo:check-block-postcond}
Execute the following steps:
\begin{enumerate}[label={\upshape Step~\arabic*:}]
\item Compute the graph $(\alphabet{G}, {\rightarrowtail})$.
  Then perform the subsequent steps for each~$\beta_k$.

\item Compute the SCCs of the edges of $G_{\beta_k}$,
  along with the reachability relation~${\preceq}$ between SCCs (a partial order).

\item Compute $\min(a)$ for each~$a\in\alphabet{G}$,
  the set of ${\preceq}$-minimal SCCs of $G_{\beta_k}$ containing some edge labeled by an action~$b\in\alphabet{G_{\beta_k}}$
  with $(b,a)\notin I$.

\item Compute $\max(a)$ for each~$a\in\alphabet{G}$,
  the set of ${\preceq}$-maximal SCCs of $G_{\beta_k}$ containing some edge labeled by an action~$b\in\alphabet{G_{\beta_k}}$
  with $(a,b)\notin I$.

\item Check if there exist SCCs $S_1,S_2$ of $G_{\beta_k}$
  such that
  \begin{itemize}
  \item $S_1 \preceq S_2$, and if $S_1=S_2$ then $S_1$ is a non-trivial SCC%
    \footnote{A non-trivial SCC must either contain at least two elements, or its single element must have a self-loop.},
    and
  \item the graph $(\alphabet{G},{\rightarrowtail})$ has an edge from some~$a$ to some~$b$ with $S_1\in\min(a)$ and $S_2\in\max(b)$.
  \end{itemize}
\end{enumerate}
If such $S_1,S_2$ exist (for some~$\beta_k$), then $\F$~is unsound. Otherwise, $\F$~is sound.
\end{thmalgorithm}
\begin{toappendix}
\begin{lemma}
  \label{thm:soundness-check-block-postcond}
  If \cref{algo:check-block-postcond} claims that $\F$ is sound,
  then $\F$ is indeed sound.
\end{lemma}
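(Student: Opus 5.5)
We argue by contraposition and reduce to \cref{conj:polynomial-sound-atomic}: if $\F$ is unsound, then there exist $k$, a trace $z_1\ldots z_m\in\traces{G_{\beta_k}}$ and indices $i<j$ with $z_i\rightarrowtail z_j$. From this we extract SCCs $S_1,S_2$ of $G_{\beta_k}$ satisfying the conditions of Step~5, so the algorithm reports unsoundness.

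Unfold the definition ${\rightarrowtail} = \compl{I}\comp\big((\progOrderRel\cup\atRel)\comp\compl{I}\big)^+$. It gives actions $a,b\in\alphabet{G}$ with $(z_i,a)\notin I$, $(b,z_j)\notin I$ and $a\,\big((\progOrderRel\cup\atRel)\comp\compl{I}\big)^{*}\comp(\progOrderRel\cup\atRel)\,b$. We need the edge of $(\alphabet{G},\rightarrowtail)$ to go from some $a'$ to some $b'$ with $(z_i,a')\notin I$ and $(b',z_j)\notin I$. One option is to take $a'=a$ and $b'=b$ and reread Steps~3 and~4: $\min(a)$ collects SCCs with some $b''$ such that $(b'',a)\notin I$, so $z_i$ plays the role of $b''$. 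If instead the computed graph records $x\rightarrowtail y$ with the flanking $\compl{I}$ steps included, we apply the definitions to $x=z_i$ and $y=z_j$ directly. The cleanest choice is to read Step~5 as asking for an edge $a\rightarrowtail b$ with $a,b$ themselves block-labels, or flanked by them through the $\min/\max$ conditions. Either way, the task is to locate the edges labelled $z_i$ and $z_j$ in SCCs $T_i\preceq T_j$ of $G_{\beta_k}$. Here $T_i\preceq T_j$ holds because $z_i$ precedes $z_j$ on a path. If $T_i=T_j$, that SCC contains two distinct edges, or one edge traversed twice when $z_i=z_j$, so it is non-trivial.

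Next we pass from $T_i,T_j$ to the $\preceq$-minimal and maximal representatives that Steps~3 and~4 record. $T_i$ contains an edge whose label $c$ satisfies the relevant non-commutation, so some minimal $S_1\in\min(\cdot)$ has $S_1\preceq T_i$. Symmetrically some maximal $S_2\in\max(\cdot)$ has $T_j\preceq S_2$, hence $S_1\preceq S_2$.

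The main obstacle is the case $S_1=S_2$ trivial, which Step~5 excludes. Then $S_1=T_i=T_j=S_2$ is a single edge with no self-loop. A path in $G_{\beta_k}$ can use that edge only once, so $z_i$ and $z_j$ would be the same position, contradicting $i<j$. We must also check that replacing $T_i$ by a smaller $S_1$ does not collapse into this case wrongly. If $S_1\neq T_i$, then $S_1\prec T_i\preceq S_2$ already gives $S_1\neq S_2$. The same argument covers $S_2\neq T_j$. The remaining case $S_1=T_i$ and $S_2=T_j$ is the one just handled. Hence $S_1,S_2$ meet Step~5, the algorithm reports unsound, and the lemma follows.
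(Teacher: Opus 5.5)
Your skeleton matches the paper's proof: contraposition through \cref{conj:polynomial-sound-atomic}, unfolding $z_i\rightarrowtail z_j$ into $a,b$, taking the SCCs $T_i,T_j$ of the $i$-th and $j$-th edges, and passing to $\preceq$-extremal representatives. Your handling of the case $S_1=S_2$ is correct and more explicit than the paper's: antisymmetry forces $S_1=T_i=T_j=S_2$, which is non-trivial since $i<j$. The gap is the edge condition of Step~5, which you never discharge. Your ``either way'' hides that neither of your options works for the algorithm as written. The unfolding gives only $(a,b)\in M$, where $M:=(\progOrderRel\cup\mathsf{at})\comp\big(\compl{I}\comp(\progOrderRel\cup\mathsf{at})\big)^*$, and ${\rightarrowtail}=\compl{I}\comp M\comp\compl{I}$. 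An edge $a\rightarrowtail b$, combined with the $\compl{I}$-steps already built into $\min(a)$ and $\max(b)$, would need an extra $\compl{I}$-step at each end, which you do not have. Your second option also fails: $\min(z_i)\neq\emptyset$ needs a block label $c$ with $(c,z_i)\notin I$, and nothing provides one.

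This is not a formality. Let $G$ have two branches, $z_1z_2$ (the atomic block) and $a\,b$, and let $I$ contain all pairs except $(z_1,a)$ and $(b,z_2)$. The interleaving $\ind{z_1}{1}\ind{a}{2}\ind{b}{2}\ind{z_2}{1}$ shows that $\F$ is unsound. Yet only $\min(a)$ and $\max(b)$ are non-empty, and $(a,b)\notin{\rightarrowtail}$ because no pair of $\compl{I}$ starts at $a$. So the algorithm as literally stated answers ``sound''. You therefore have to commit to a repaired algorithm. One repair is to let Step~1 compute the graph of $M$. This is what the paper's proof tacitly does when it asserts, ``by definition'', an edge from $a$ to $b$. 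Your first option then goes through, with $z_i$ placing $T_i$ above some member of $\min(a)$ and $z_j$ placing $T_j$ below some member of $\max(b)$. The other repair is to let Steps~3--4 record the SCCs containing the label itself, after which your second option goes through with the edge $z_i\rightarrowtail z_j$. With either fix, the rest of your argument is correct.
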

\begin{proof}
  We show the contraposition.
  Hence, assume that $\liftIdx{\mu_\F}(\lang{\fuseblocks{P}{\F}})$ is \emph{not} a Mazurkiewicz reduction of $\lang{P}$.
  By \cref{conj:polynomial-sound-atomic}, we know that there exists a word $z_1\ldots z_m\in \traces{G_{\beta_k}}$ for some~$k$,
  and indices $i<j$,
  such that $z_i \rightarrowtail z_j$.
  In particular, let $(z_i,a)\notin I$,
  $(b,z_j)\notin I$, with $(a,b)\in (\progOrderRel\cup\atRel)\comp (\compl{I}\comp (\progOrderRel\cup\atRel)^*)$.

  Consider the path through~$G_{\beta_k}$ for $z_1\ldots z_m$,
  and let $S_1$ be the SCC of the $i$-th edge resp.\ $S_2$ the SCC of the $j$-th edge in this run.
  Then we know that $S_1'\in\min(a)$ and $S_2'\in\max(b)$,
  for some $S_1'\preceq S_1$ and $S_2\preceq S_2'$.
  As the $j$-th edge is clearly reachable from the $i$-th edge,
  we know that $S_1 \prec S_2$ or $S_1=S_2$ is non-trivial (due to $i<j$, it cannot be trivial).
  Furthermore, by definition, the graph~$(\alphabet{G},{\rightarrowtail})$ has an edge from~$a$ to~$b$.

  Thus, \cref{algo:check-block-postcond} does not conclude that $\F$ is sound,
  and instead declares it unsound.
  \qed
\end{proof}

\begin{lemma}
  \label{thm:completeness-check-block-postcond}
  If \cref{algo:check-block-postcond} claims that $\F$ is unsound,
  then $\F$ is indeed unsound.
\end{lemma}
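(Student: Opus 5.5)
Assume Step~5 of \cref{algo:check-block-postcond} finds, for some~$\beta_k$, SCCs $S_1,S_2$ of $G_{\beta_k}$ and actions $a,b$ with $a\rightarrowtail b$, $S_1\in\min(a)$, $S_2\in\max(b)$, $S_1\preceq S_2$, and $S_1$ non-trivial if $S_1=S_2$. By \cref{conj:polynomial-sound-atomic}, it suffices to build a trace $z_1\ldots z_m\in\traces{G_{\beta_k}}$ and indices $i<j$ with $z_i\rightarrowtail z_j$.

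The witnesses come from the definitions of $\min$ and $\max$. Since $S_1\in\min(a)$, some edge $e_1\in S_1$ carries a label $c$ with $(c,a)\notin I$. Since $S_2\in\max(b)$, some edge $e_2\in S_2$ carries a label $d$ with $(b,d)\notin I$.

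The key observation is that ${\rightarrowtail}$ is closed under extension at both ends by $\compl{I}$. We have $a\rightarrowtail b$, which unfolds to $(a,x)\in\compl{I}$, a chain $((\progOrderRel\cup\atRel)\comp\compl{I})^+$ ending at~$b$. Reflexivity of $\progOrderRel$ only holds for actions occurring in~$G$, which is true here, so $(c,a)\in\compl{I}$ and $(a,a)\in\progOrderRel$. Prepending these gives $c\mathrel{\compl{I}}a\mathrel{\progOrderRel}a\mathrel{\compl{I}}x\cdots b$. Appending $(b,b)\in\progOrderRel$ and $(b,d)\in\compl{I}$ then yields $c\rightarrowtail d$. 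Note that $a$ and $b$ must be in $\alphabet{G}$ for reflexivity of $\progOrderRel$; they are, as nodes of the graph computed in Step~1.

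Next we construct the path. We need a path of $G_{\beta_k}$ from its initial to its exit location that traverses $e_1$ strictly before $e_2$. Every location is reachable from the initial location and reaches the exit location; we assume this holds for block bodies as well, as implied by $G=G'[\ldots]$. So we can reach the source of $e_1$ and take $e_1$. Since $S_1\preceq S_2$, we can then continue to the source of $e_2$ and take $e_2$. In the case $S_1=S_2$, non-triviality gives a cycle through the SCC, so after $e_1$ we can reach $e_2$ again at a strictly later position, even when $e_1=e_2$. Finally we complete the path to the exit location. Letting $i$ and $j$ be the positions of $e_1$ and $e_2$, we get $i<j$, $z_i=c$, $z_j=d$, and hence $z_i\rightarrowtail z_j$.

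The main obstacle is the SCC case distinction: guaranteeing a strictly later occurrence $j>i$ when $S_1=S_2$, or when $e_1=e_2$ in a self-loop SCC. Edge-level SCCs with a cycle handle this, because an edge in a non-trivial SCC lies on a cycle within that SCC, and every edge of the SCC is reachable from every other. The rest is routine composition of the relations.
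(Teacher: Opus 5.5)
Your proof is correct and follows the paper's argument: you pick witness edges labelled $c$ and $d$ in $S_1$ and $S_2$, embed a path through them into a trace of $G_{\beta_k}$ with $i<j$ (unrolling a cycle when $S_1=S_2$), and conclude via \cref{conj:polynomial-sound-atomic}. You also splice $(c,a)\in\compl{I}$ and $(b,d)\in\compl{I}$ onto $a\rightarrowtail b$ using reflexivity of $\progOrderRel$, a detail the paper compresses into ``via $a$ and $b$''.
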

\begin{proof}
  Let $S_1,S_2$ be two SCCs of $G_{\beta_k}$, for some~$k$, such that either $S_1 \prec S_2$ or $S_1=S_2$ is nontrivial,
  and let $a,b\in\alphabet{G}$ such that $a\rightarrowtail b$, $S_1\in\min(a)$ and $S_2\in\max(b)$.
  Using \cref{conj:polynomial-sound-atomic}, we only have to show that there exists a word $z_1\ldots z_m\in\traces{G_{\beta_k}}$ and indices $i<j$
  with $z_i \rightarrowtail z_j$.

  As $S_1\in\min(a)$, there must exist an edge in $S_1$ labeled by some $z$ such that $(z,a)\notin I$.
  Similarly, as $S_2\in\max(b)$, there must exist an edge in $S_2$ labeled by some $z'$ with $(b,z')\notin I$.
  As $S_1 \preceq S_2$, there exists a path in~$G_{\beta_k}$ from some (and thus every) edge in $S_1$ to some (and thus every) edge in $S_2$.
  In particular, there is a path in~$G_{\beta_k}$ from the edge labeled by $z$ to the edge labeled by $z'$.
  Both in the case that $S_1 \prec S_2$ as well as in the case that $S_1=S_2$ is nontrivial,
  we can assume the path from $z$ to $z'$ to contain at least two edges
  (if we use the same edge for $z$ and $z'$, it must be in a nontrivial SCC and we can thus unroll some loop once to ensure this).
  By our assumption that all states of~$G_{\beta_k}$ are reachable and can reach an accepting state,
  this path can be embedded in a word $z_1\ldots z_m\in \traces{G_{\beta_k}}$ with $z_i=z$ and $z_j=z'$ for some $1 \leq i < j \leq m$.
  Note that $i<j$ relies particularly on the above assumption that the path from $z$ to $z'$ contains at least two edges.
  We then have $z_i\rightarrowtail z_j$, via~$a$ and~$b$.
  \qed
\end{proof}
\end{toappendix}

\begin{theoremrep}
  \Cref{algo:check-block-postcond} decides soundness of atomic fusions, and terminates in polynomial time.
\end{theoremrep}
\begin{proof}
  It follows from \cref{conj:polynomial-sound-atomic,thm:soundness-check-block-postcond,thm:completeness-check-block-postcond} that \cref{algo:check-block-postcond} always returns the correct result.

  To see that \cref{algo:check-block-postcond} terminates in polynomial time,
  consider that each step can be implemented in polynomial time:
  \begin{enumerate}[label={Step \arabic*:}]
  \item $\progOrderRel$ and $\atRel$ can be decided in linear time, $I$ is given.
    The transitive closure between all pairs is possible in cubic time.
  \item SCCs can be computed in time polynomial in the size of~$G_{\beta_k}$.
  \item For each~$a\in\alphabet{G}$, $\min(a)$ can be computed in time $O(|G|)$.
  \item Similarly for $\max(a)$.
  \item This step can also clearly be done in polynomial time.
  \end{enumerate}

  Hence, the theorem holds.
  \qed
\end{proof}

\subsection{Deciding Soundness of Synchronous Reductions}
\label{sec:check-barriers}
Next, we consider the soundness problem for synchronous reductions.
Hence, let $P = ((\emptyset, \syncPred[\top]), G)$ once again be a program over the trivial synchronization alphabet,
and let $G'$~be a sync-point instrumentation of~$P$.
We are interested in deciding whether $G'$~is sound,
i.e., whether $\lang{\addbarriers{P}{G'}}$ is a Mazurkiewicz reduction of $\lang{P}$, up to the commutativity relation~$I$.
Recall that $\lang{\addbarriers{P}{G'}}$ is given by
\begin{multline}
  \lang{\addbarriers{P}{G'}} = \{\, \tau \in \Actions_\indexed^* \mid \exists \hat{\tau}\in(\Actions\cup\S_\barrier)_\indexed^* \,.\, \tau = \hat{\tau}|_{\Actions_\indexed}
    \text{ and } \sync[\barrier]{\hat{\tau}}\\
    \text{ and } \forall i\in\N\,.\,
      \localTrace{\hat{\tau}}{i} \in \traces{G'} \cup \{\varepsilon\}
     \,\}
\end{multline}
By the definition of~$\syncPred[\barrier]$ (\cref{eq:sync-barrier}),
the underlying traces~$\hat{\tau}$ with~$\sync[\barrier]{\hat{\tau}}$ can be split into ``phases''
consisting of actions from~$\Actions$, separated by occurrences of sequences~$\langle \barrier:t_1\rangle\ldots \langle\barrier:t_n\rangle$, where $t_1,\ldots,t_n$~are the threads that are still running.
In order for the sync-point reduction to be sound, it must be possible to bring every interleaving of the original program $P$ into such a ``phase form'' by swapping commuting actions
and then inserting occurrences of~$\langle \barrier:t_1\rangle\ldots \langle\barrier:t_n\rangle$ in the right places.
Based on this idea, we define a ``phase order'' on actions:
\begin{definition}
  The \emph{phase order} is the relation $\phaseOrder \subseteq \alphabet{G}^2$,%
  with $(a, b) \in \phaseOrder$ if there exists a trace $\hat{\tau}\in(\Actions\cup\S_\barrier)_\indexed^*$
  with $\sync[\barrier]{\hat{\tau}}$ and $\hat{\tau}|_{\Actions_\indexed}\in\lang{\addbarriers{P}{G'}}$
  such that for some $i\neq j$,
  we have $\localTrace{\hat{\tau}}{i}=\tau_1 a \sigma_1$, $\localTrace{\hat{\tau}}{j}=\tau_2 b \sigma_2$,
  and $|\tau_1|_{\bullet} < |\tau_2|_{\bullet}$.
\end{definition}
Intuitively, the phase order captures that $a$~can occur in a strictly later phase than~$b$,
in some execution of the instrumented program.
As we are considering programs without synchronization (other than the newly-introduced sync-point~$\bullet$), we can break this condition down further:
\begin{observationrep}
  \label{obs:pho-simple-condition}
  We have $(a, b)\in\phaseOrder$ if and only if there exist two traces $\rho_1 a \sigma_1, \rho_2 b \sigma_2 \in \traces{G'}$
  with $|\rho_1|_\bullet < |\rho_2|_\bullet$.
\end{observationrep}
\begin{proof}
  Let $(a,b)\in\phaseOrder$.
  Let $\hat{\tau},\rho_1,\sigma_1,\rho_2,\sigma_2$ be as in the definition of~$\phaseOrder$.
  Then by definition,
  we have $\rho_1a \sigma_1, \rho_2b \sigma_2 \in \traces{G'}$, and we are done.

  Let now $\rho_1,\rho_2,\sigma_1,\sigma_2\in\alphabet{G'}$
  such that $\rho_1 a \sigma_1, \rho_2 b \sigma_2 \in \traces{G'}$,
  and $|\rho_1|_\barrier < |\rho_2|_\barrier$ holds.
  Let $\tau_{1,1},\ldots,\tau_{1,k_1} \in \Actions^*$ be the trace fragments such that $\rho_1 a \sigma_1 = \tau_{1,1}\barrier \ldots \barrier \tau_{1,k_1}$.
  Similarly for $\rho_2 b \sigma_2$ and $\tau_{2,1},\ldots,\tau_{2,k_2}$.
  Wlog.\ assume $k_1 \leq k_2$.
  Then the combined trace
  \[
    \tau = \langle \tau_{1,1}:1\rangle\,\langle \tau_{2,1}:2\rangle \bullet \ldots \bullet \langle\tau_{1,k_1}:1\rangle\,\langle\tau_{2,k_1}:2\rangle \bullet \ind{\tau_{2,k_1+1}}{2} \barrier \ldots \barrier \ind{\tau_{2,k_2}}{2}
  \]
  is a trace of $\lang{\addbarriers{P}{G'}}$,
  with a prefix $\langle \rho_1 a : 1\rangle$ of $\localTrace{\tau}{1}$
  and a prefix $\langle \rho_2 b : 2 \rangle$ of $\localTrace{\tau}{2}$,
  and hence we have $(a,b)\in\phaseOrder$.
  \qed
\end{proof}
Given~$\phaseOrder$, it is straightforward to decide soundness of the reduction:
\begin{propositionrep}
  \label{conj:barrier-sound-pho}
  The sync-point instrumentation~$G'$ is sound
  if and only if
  $\phaseOrder \subseteq I^{-1}$ holds.
\end{propositionrep}
\begin{proofsketch}
  If there exists a pair $(a,b)\in\phaseOrder \setminus I^{-1}$, we construct an interleaving (containing $a$~and~$b$),
  and use the injectivity condition in~\cref{def:barrier-instr} to show that no representative for this interleaving exists in~$\lang{\fuseblocks{P}{G}}$.

  Conversely, if $\phaseOrder \subseteq I^{-1}$ holds, we show that every $\tau\in\lang{P}$ does have a representative in~$\lang{\fuseblocks{P}{G}}$.
\end{proofsketch}
\begin{proof}
  First, assume there are actions with $(a, b)\in\phaseOrder$ but $(b, a) \notin I$.
  Let $\hat{\tau}$, $\tau_1,\sigma_1,\tau_2,\sigma_2$ be as in the definition of $\phaseOrder$.
  The interleaving $\tau:=\langle \tau_2 |_\Actions \,b : 2\rangle\, \langle \tau_1|_\Actions\,a : 1\rangle\, \langle \sigma_2|_\Actions : 2 \rangle\,\langle \sigma_1|_\Actions:1\rangle \in \lang{P}$
  has no representative in $\lang{\addbarriers{P}{G'}}$.
  To see this, suppose there existed such a representative $\hat{\tau}'|_{\Actions_\indexed}$,
  with $\sync[\barrier]{\hat{\tau}'}$,
  $\localTrace{\hat{\tau}'}{1}|_{\Actions} = (\tau_1a\sigma_1)|_{\Actions}$,
  $\localTrace{\hat{\tau}'}{2}|_\Actions = (\tau_2 b\sigma_2)|_\Actions$,
  and $\tau \coveredby{I} \hat{\tau}'|_{\Actions_\indexed}$.
  By the injectivity of projection to~$\Actions$, as required for sync-point instrumentations,
  we can conclude that $\localTrace{\hat{\tau}'}{1} = \tau_1\,a\,\sigma_1$ and $\localTrace{\hat{\tau}'}{2} = \tau_2\,b\,\sigma_2$.
  Hence, the prefix of $\hat{\tau}'$ up to the relevant occurrence of $\langle a:1\rangle$ contains $|\tau_1|_\bullet$ occurrences of $\bullet$,
  and the prefix of $\tau'$ up to the relevant occurrence of $\langle b:2\rangle$ contains $|\tau_2|_\bullet$ occurrences of $\bullet$.
  As $\tau \coveredby{I} \hat{\tau}'|_{\Actions_\indexed}$ and $(b,a)\notin I$, we have that the relevant occurrence of $\ind{b}{2}$ in $\hat{\tau}'$ (as in $\tau$)
  appears before the relevant occurrence of $\langle a:1\rangle$.
  Since we have~$\sync[\barrier]{\hat{\tau}'}$, this contradicts the fact that $|\tau_1|_\bullet < |\tau_2|_\bullet$.
  Thus, our supposition of the existence of $\hat{\tau}'$ was incorrect,
  and the sync-point instrumentation~$G'$ is unsound.
  \medskip

  Second, assume that $\phaseOrder \subseteq I^{-1}$ holds,
  and take an arbitrary trace $\tau\in\lang{P}$.
  Since we have $\traces{G'}|_\Actions = \traces{G}$,
  there exists a trace $\hat{\tau}\in(\Actions\cup\S_\barrier)^*$
  with $\localTrace{\hat{\tau}}{i}\in\{\varepsilon\}\cup\traces{G'}$ for every~$i$,
  and $\hat{\tau}|_{\Actions_\indexed} = \tau$.
  Note that we do not yet demand that~$\sync[\barrier]{\hat{\tau}}$ holds.
  Consider an extended commutativity relation~$I' \subseteq (\alphabet{G}\cup\S_\barrier)^2$,
  in which $\barrier$ commutes against everything (in either direction).
  By the assumption $\phaseOrder \subseteq I^{-1}$,%
  there exists some~$\hat{\tau}'$ covering $\hat{\tau}$ (up to~$I'$) such that $\sync[\barrier]{\hat{\tau}'}$ holds.
  We conclude that $\hat{\tau}'|_{\Actions_\indexed}\in\lang{\addbarriers{P}{G'}}$ is a representative of~$\tau$.
  \qed
\end{proof}

The fact that the above proof uses the injectivity condition of~\cref{def:barrier-instr} only for one direction of the equivalence
implies that even for an extended class of sync-point instrumentations that do not satisfy injectivity,
$\phaseOrder \subseteq I^{-1}$ implies soundness of the instrumentation.
Our approach thus remains sound in this extended class, though not complete.

To check soundness of a sync-point instrumentation,
the only remaining difficulty is to compute $\phaseOrder$.
In a program model without synchronization, this is possible in polynomial time, using the following approach:

\begin{thmalgorithm}
\label{algo:compute-pho}
  We compute $\phaseOrder$ as follows:

  \begin{enumerate}[label={\upshape Step~\arabic*:}]
  \item Compute $\min_\bullet(a) := \min\{\, |\tau|_\bullet \mid \exists \sigma\,.\,\tau a\sigma \in \traces{G'} \,\}$ for each action~$a$,
    by a shortest-path computation on~$G'$,
    where the initial location of~$G'$ is the source,
    locations enabling~$a$ are the targets,
    and the weight of an edge is 1 if labeled by $\barrier$ and 0 otherwise.
  \item Compute $\max_\bullet(a) := \max\{\, |\tau|_\bullet \mid \exists \sigma\,.\, \tau a \sigma \in \traces{G'} \,\} \in \N \cup \{\infty\}$ for each action~$a$.

    To do so, consider all loop-free paths from the initial location to a location enabling $a$.
    Determine if any such path can be pumped with a loop that contains~$\bullet$ (the loop may also include~$a$ itself).
    If so, then $\max_\bullet(a)$ is~$\infty$.
    Otherwise, $\max_\bullet(a)$ is the maximum weight of the loop-free paths, where the weight of an edge is 1 if labeled by $\bullet$ and 0 otherwise.
  \item Check for all pairs $(a,b)$ whether $\min_\bullet(a) < \max_\bullet(b)$.
    If so, then $(a,b)\in\phaseOrder$ holds.
    Otherwise, we have $(a,b)\notin\phaseOrder$.
  \end{enumerate}
\end{thmalgorithm}

\begin{toappendix}
  \begin{lemma}
    \label{lem:compute-pho-polynomial}
    \Cref{algo:compute-pho} computes $\phaseOrder$ in polynomial time.
  \end{lemma}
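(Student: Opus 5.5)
The plan is to prove correctness and polynomial running time separately. Correctness rests on \cref{obs:pho-simple-condition}: $(a,b)\in\phaseOrder$ holds iff there are traces $\rho_1 a\sigma_1,\rho_2 b\sigma_2\in\traces{G'}$ with $|\rho_1|_\barrier<|\rho_2|_\barrier$. Every location of $G'$ reaches $\ell_\exit$, so any path from $\ell_\init$ to a location enabling $a$ (followed by the $a$-edge) extends to a full trace. Hence the set of possible counts $|\rho_1|_\barrier$ is exactly
\[N(a):=\{\,|\tau|_\barrier \mid \ell_\init\xrightarrow{\tau}_{\Delta'}\ell \text{ for some } \ell \text{ with an outgoing } a\text{-edge}\,\}.\]
The condition therefore reads: some $x\in N(a)$ and $y\in N(b)$ satisfy $x<y$. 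This is equivalent to $\min N(a)<\sup N(b)$. Indeed, take $x=\min N(a)$ and pick $y\in N(b)$ exceeding it; such a $y$ exists iff $\sup N(b)>\min N(a)$, including the case $\sup=\infty$. It thus remains to show that Step~1 computes $\min N(a)=\min_\barrier(a)$ and Step~2 computes $\sup N(b)=\max_\barrier(b)$.

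\textbf{Step~1.} This is a shortest-path problem with weights in $\{0,1\}$. The minimum over walks equals the minimum over simple paths, since removing cycles never increases the weight. Dijkstra or 0-1 BFS from $\ell_\init$ then gives the value in polynomial time.

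\textbf{Step~2.} This is the delicate step, and I would make the described procedure precise instead of enumerating loop-free paths, which could be exponentially many. Consider the SCCs of $G'$ and call an SCC \emph{$\barrier$-pumpable} if it contains an edge labelled $\barrier$ with both endpoints in the SCC.
\begin{itemize}
\item If some location $\ell$ enabling $b$ is reachable from $\ell_\init$ through a pumpable SCC, then any walk through that SCC can repeat the $\barrier$-cycle arbitrarily often before continuing to $\ell$, so $\sup N(b)=\infty$.
\item Otherwise, every walk from $\ell_\init$ to such an $\ell$ traverses only SCCs without internal $\barrier$-edges. Its $\barrier$-count is then the number of $\barrier$-edges between distinct SCCs, which is bounded. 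Its maximum is a longest-path computation on the condensation DAG, with inter-SCC $\barrier$-edges weighted $1$ and all others $0$, and it runs in linear time in topological order.
\end{itemize}
This is equivalent to the "loop-free path that can be pumped" phrasing of the algorithm. The main thing to verify is that a reachable pumpable cycle lying on some walk to $\ell$ is exactly what makes the set unbounded. The forward direction is the pumping argument above. The converse holds because, without such a cycle, the count is bounded by the number of inter-SCC $\barrier$-edges.

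\textbf{Step~3 and running time.} Step~3 compares $O(|\alphabet{G}|^2)$ pairs. All steps are polynomial in $|G'|$, so \cref{algo:compute-pho} computes $\phaseOrder$ in polynomial time.
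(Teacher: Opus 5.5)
Your proof is correct. It has the same overall structure as the paper's: reduce correctness to \cref{obs:pho-simple-condition}, then argue each step is polynomial. The paper's proof is only a four-line sketch, and your version differs from it in two useful ways.

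First, you actually justify that the Step~3 test $\min_\barrier(a)<\max_\barrier(b)$ captures the existential condition of the observation. You do this via co-reachability of $\ell_\exit$ and the $\min$/$\sup$ argument, whereas the paper simply asserts that ``the algorithm decides the condition''.

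Second, you implement Step~2 differently. The paper proposes a nested search: first collect the locations on paths to the action, then check for each whether it can reach itself again. As written, this only detects cycles and leaves implicit that the cycle must carry a $\barrier$. It also says nothing about the bounded case. Taken literally, ``maximum weight of the loop-free paths'' is a longest-simple-path problem, which is not polynomial in general. Your SCC formulation settles both points. A $\barrier$-pumpable SCC upstream of the target location characterizes $\max_\barrier=\infty$. Otherwise every $\barrier$-edge on a relevant walk is an inter-SCC edge, and the maximum becomes a linear-time longest-path computation on the condensation DAG.

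The one step you leave implicit is that every path in the condensation DAG is realized by an actual walk in $G'$. This follows because each SCC is strongly connected, so a walk can enter and leave it at arbitrary locations.

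In short, your version gives a rigorous polynomial bound for Step~2 at the cost of slightly more machinery. The paper's sketch is shorter but does not justify the finite case.
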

  \begin{proof}
    The algorithm decides the condition of~\cref{obs:pho-simple-condition}.
    Step~1 can be implemented via Dijkstra's algorithm in linear time.
    Step~2 can be implemented by a nested search (first identifying all locations on a path to~$a$, and then for each such location checking if it can reach itself again) in quadratic time.
    Finally, Step~3 is a simple constant-time check for each of the quadratically many pairs~$(a,b)$.
    \qed
  \end{proof}
\end{toappendix}

\begin{theoremrep}
  Soundness of sync-point instrumentations (in programs with the trivial synchronization alphabet)
  can be decided in polynomial time.
\end{theoremrep}
\begin{proof}
  It suffices to run~\cref{algo:compute-pho}, which requires polynomial time as per~\cref{lem:compute-pho-polynomial},
  and then checking the inclusion given in~\cref{conj:barrier-sound-pho}, which requires also only quadratic time.
  \qed
\end{proof}

\subsection{Deciding Soundness of Natural Reductions}
\label{sec:check-natural-reductions}
Let $P$~be a program, and $(\F,G')$~be a natural reduction specification for~$P$.
When deciding whether $(\F,G')$ is sound,
it is of course desirable to make use of the algorithms introduced in the previous subsections.
And indeed, the following observation gives us some hope that soundness checking of complicated reductions
can be decomposed into smaller steps.
\begin{propositionrep}
  Let $L_1,L_2,L_3\subseteq\Actions_\indexed^*$ such that $L_1 \subseteq L_2 \subseteq L_3$.
  Then $L_1$~is a Mazurkiewicz reduction of~$L_3$
  if and only if
  $L_1$~is a Mazurkiewicz reduction of~$L_2$, and $L_2$~is a Mazurkiewicz reduction of~$L_3$.
\end{propositionrep}
\begin{proof}
\newcommand\cl[1]{\mathsf{cl}(#1)}
  Given any language~$L\subseteq \Actions_\indexed^*$,
  let its \emph{closure} $\cl{L}$ be the set of all $\tau\in\Actions_\indexed^*$ such that $\tau\coveredby{I}\tau'$ for some $\tau'\in L$.
  This closure satisfies the usual closure laws (extensivity, monotonicity, idempotence),
  and $L$ is a Mazurkiewicz reduction of~$L'$ if and only if $L\subseteq L' \subseteq \cl{L}$.

  Suppose $R_2$ is a reduction of $P$, i.e., $R_2 \subseteq P$ and $P \subseteq \cl{R_2}$.
  We know already by assumption that $R_2 \subseteq R_1$ and $R_1 \subseteq P$.
  Furthermore, we have
  \[
    P \subseteq \cl{R_2} \subseteq \cl{R_1}
  \]
  by monotonicity of the closure,
  and
  \[
    R_1 \subseteq P \subseteq \cl{R_2}
  \]
  by assumption.
  Hence, $R_1$ is a reduction of $P$ and $R_2$ is a reduction of $R_1$.

  For the opposite direction, assume that $P \subseteq \cl{R_1}$ and $R_1 \subseteq \cl{R_2}$.
  Then we have $R_2 \subseteq P$ by transitivity (from assumptions),
  and
  \[
    P \subseteq \cl{R_1} \subseteq \cl{\cl{R_2}} \subseteq \cl{R_2}
  \]
  by assumption, monotonicity and idempotence of the closure.
  Hence, $R_2$ is a reduction of $P$.
\end{proof}
Recall that $(\F,G')$~is sound (wrt.~$I$) if $\liftIdx{\mu_\F}(\lang{\addbarriers{\fuseblocks{P}{\F}}{G'}})$ is a Mazurkiewicz reduction (up to~$I$) of~$\lang{P}$.
As an instantiation of the above result,
we get that $(\F,G')$~is sound if and only if
\begin{enumerate}[label=(\arabic*)]
\item $\liftIdx{\mu_\F}(\lang{\addbarriers{\fuseblocks{P}{\F}}{G'}})$ is a Mazurkiewicz reduction up to~$I$ of~$\liftIdx{\mu_\F}(\lang{\fuseblocks{P}{\F}})$,
\item and $\liftIdx{\mu_\F}(\lang{\fuseblocks{P}{\F}})$ is a Mazurkiewicz reduction up to~$I$ of~$\lang{P}$.
\end{enumerate}
Condition~(2) directly corresponds to soundness of the atomic fusion~$\F$,
and can therefore be decided using~\cref{algo:check-block-postcond}.
The situation is more complex for condition~(1),
as (due to the presence of~$\liftIdx{\mu_\F}$) it does not correspond directly to soundness of~$G'$.
Conceptually, it is not even immediately clear what soundness of~$G'$ would mean:
$\lang{\fuseblocks{P}{\F}}$ and $\lang{\addbarriers{\fuseblocks{P}{\F}}{G'}}$ are not languages over~$\alphabet{G}$;
they contain the block symbols introduced by~$\F$ (in addition to a subset of actions from~$\alphabet{G}$).
However, we can define the following commutativity relation, where $\F=(G'',(\beta_1,G_{\beta_1}),\ldots,(\beta_n,G_{\beta_n}))$,
and $a,b$ range over~$\alphabet{G}$:
\begin{align*}
  \tilde{I} :=\null &I\cap\alphabet{G'}^2\\
                    &\null \cup \{\, (a,\beta_k) \mid \forall b\in\alphabet{G_{\beta_k}}\,.\, (a,b)\in I \,\}\\
                    &\null \cup \{\, (\beta_k,b) \mid \forall a\in\alphabet{G_{\beta_k}}\,.\, (a,b)\in I \,\}\\
                    &\null \cup \{\, (\beta_{k_1},\beta_{k_2}) \mid \forall a\in\alphabet{G_{\beta_{k_1}}}\,\,\forall b\in\alphabet{G_{\beta_{k_2}}}\,.\, (a,b)\in I \,\}
\end{align*}
For this commutativity relation, we show:
\begin{lemmarep}
  Condition~(1) holds
  if and only if
  $\lang{\addbarriers{\fuseblocks{P}{\F}}{G'}}$ is a Mazurkiewicz reduction up to~$\tilde{I}$ of~$\lang{\fuseblocks{P}{\F}}$.
\end{lemmarep}
\begin{proof}
  Suppose condition~(1) holds.
  We have to show that $\phaseOrder \subseteq \tilde{I}^{-1}$.
  Hence, let $(x,y)\in\phaseOrder$.
  Note that both $x$ and $y$ can either be ``normal actions'' from~$\alphabet{G}$,
  or block symbols~$\beta_{k_1},\beta_{k_2}$.

  If they are normal actions $x=a,y=b$, then consider an interleaving of~$P$ in which $\ind{a}{1}$ occurs after~$\ind{b}{2}$
  -- as there is no synchronization, such an interleaving exists.
  However, by condition~(1) we know that there exists a representative of this interleaving that respects the sync-point semantics,
  and thus must have $\ind{b}{2}$ occur before $\ind{a}{1}$.
  Hence, $(b,a)\in I \subseteq \tilde{I}$ follows.

  In the case that one or both are block symbols~$\beta_{k_1},\beta_{k_2}$,
  we perform the same reasoning for all $a\in\alphabet{G_{\beta_{k_1}}}$ resp.\ $b\in\alphabet{G_{\beta_{k_2}}}$.
  Then, by the definition of~$\tilde{I}$, it also follows that $(y,x)\in\tilde{I}$.
\medskip

  Suppose now that conversely,
  $\lang{\addbarriers{\fuseblocks{P}{\F}}{G'}}$ is a Mazurkiewicz reduction up to~$\tilde{I}$ of~$\lang{\fuseblocks{P}{\F}}$.
  We know that this implies $\phaseOrder\subseteq \tilde{I}^{-1}$.
  In the case that $x=a,y=b$ are normal actions,
  this directly gives us $(b,a)\in I$ whenever $(a,b)\in\phaseOrder$.
  For block symbols, the definition of~$\tilde{I}$ similarly gives us $(b,a)\in I$ if $(x,y)\in\phaseOrder$,
  for all $a\in\alphabet{G_{\beta_{k_1}}}$ resp.\ $b\in\alphabet{G_{\beta_{k_2}}}$.
  Hence, any interleaving in $\liftIdx{\mu_\F}(\lang{\fuseblocks{P}{\F}})$ can be transformed to a covering interleaving in~$\liftIdx{\mu_\F}(\lang{\addbarriers{\fuseblocks{P}{\F}}{G'}})$.
  \qed
\end{proof}
This lemma allows us to decide condition~(1) using the approach from~\cref{sec:check-barriers}.
We conclude:
\begin{theorem}
  \label{thm:nat-red-soundness-polynomial}
  The soundness of a natural reduction specification for a program without synchronization
  is decidable in polynomial time.
\end{theorem}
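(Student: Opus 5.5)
The plan is to decompose soundness of $(\F,G')$ using the preceding proposition on chains of reductions, instantiated with
\[
L_1 = \liftIdx{\mu_\F}(\lang{\addbarriers{\fuseblocks{P}{\F}}{G'}}) \subseteq L_2 = \liftIdx{\mu_\F}(\lang{\fuseblocks{P}{\F}}) \subseteq L_3 = \lang{P}.
\]
First I verify that the inclusions hold. $L_2\subseteq L_3$ holds because executing atomic blocks uninterrupted yields genuine interleavings of~$P$. $L_1\subseteq L_2$ follows from $\lang{\addbarriers{\fuseblocks{P}{\F}}{G'}}\subseteq\lang{\fuseblocks{P}{\F}}$ and monotonicity of $\liftIdx{\mu_\F}$. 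The proposition then gives: $(\F,G')$ is sound if and only if conditions~(1) and~(2) hold. So it suffices to decide each condition in polynomial time and return the conjunction.

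Condition~(2) is exactly soundness of the atomic fusion~$\F$ for a program over the trivial synchronization alphabet. By the theorem following \cref{algo:check-block-postcond}, it is decided by that algorithm in polynomial time.

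For condition~(1), I invoke the last lemma. It reduces condition~(1) to asking whether $\lang{\addbarriers{\fuseblocks{P}{\F}}{G'}}$ is a Mazurkiewicz reduction up to~$\tilde{I}$ of $\lang{\fuseblocks{P}{\F}}$. Here $\fuseblocks{P}{\F}=((\emptyset,\syncPred[\top]),G'')$ is again a program with the trivial synchronization alphabet, with the block symbols $\beta_k$ treated as ordinary actions. $G'$ is a sync-point instrumentation of it, and $\tilde I\subseteq(\alphabet{G''}\cap\Actions)^2$ is a legitimate commutativity relation. This is precisely soundness of a sync-point instrumentation, so \cref{conj:barrier-sound-pho} and \cref{algo:compute-pho} apply. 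I compute $\phaseOrder$ on $G'$ and check $\phaseOrder\subseteq\tilde I^{-1}$.

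The remaining step is the polynomial bound on constructing~$\tilde I$. It has at most $(|\alphabet{G}|+n)^2$ pairs. Each pair involving a $\beta_k$ requires a universal check over $\alphabet{G_{\beta_k}}$ (or over pairs from two block alphabets), which takes polynomial time in the size of the input. The graphs $G''$ and $G'$ are part of the input, so every algorithm invoked runs in time polynomial in the input size. The main thing to be careful about is this: the reductions to the earlier results must produce inputs satisfying their hypotheses, in particular that $\fuseblocks{P}{\F}$ uses trivial synchronization and that $G'$ meets the injectivity requirement of \cref{def:barrier-instr} relative to $G''$. Both hold by the definition of a natural reduction specification. The rest is composition of polynomial-time procedures.
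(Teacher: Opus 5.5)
Your proposal is correct and matches the paper's argument. Like the paper, you use the chain proposition to split soundness of $(\F,G')$ into conditions (1) and (2), decide (2) with \cref{algo:check-block-postcond}, and use the lemma on $\tilde{I}$ to turn (1) into a sync-point soundness check decided by \cref{algo:compute-pho} and \cref{conj:barrier-sound-pho}; your remarks on building $\tilde{I}$ in polynomial time and on the hypotheses (trivial synchronization, injectivity) just spell out what the paper leaves implicit.
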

Observe that, in the case of programs with synchronization,
our hardness result for atomic blocks carries over to the more general problem of soundness of natural reduction specifications.
\begin{corollary}
  For every synchronization alphabet,
  the coverability problem
  is polynomial-time reducible to the soundness of natural reduction specifications.
\end{corollary}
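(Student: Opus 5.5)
The plan is to reduce to \cref{thm:atomic-blocks-coverability} by viewing an atomic fusion as a special natural reduction specification, namely one whose sync-point instrumentation is trivial. Concretely, given a program $P=((\S,\syncPred),G)$ and a configuration~$C$, I will first apply the construction from the proof of \cref{thm:atomic-blocks-coverability}. This yields, in polynomial time, a program~$P'$, an atomic fusion $\F=(G'',(\beta,G_\beta))$ of~$P'$, and a commutativity relation~$I$ such that $\F$ is unsound if and only if $C$ is coverable in~$P$.

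Next I will take as sync-point instrumentation of $\fuseblocks{P'}{\F}$ the graph $G''$ itself, i.e.\ the instrumentation obtained by inserting sync-points at the empty set $M=\emptyset$ of locations. By the observation following \cref{def:barrier-instr}, this is a valid sync-point instrumentation. Hence $(\F,G'')$ is a natural reduction specification for~$P'$, and it is clearly constructible in polynomial time.

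The remaining step is to check that $\lang{\addbarriers{\fuseblocks{P'}{\F}}{G''}} = \lang{\fuseblocks{P'}{\F}}$. Since $G''$ contains no~$\barrier$, every trace $\hat\tau$ has $\hat\tau|_{(\Actions\cup\S_\barrier)_\indexed}\in\Actions_\indexed^*$. Such a trace satisfies $\syncPred[\barrier]$: take $T$ to be the finite set of threads occurring in it, so that $\tau_1=\hat\tau|_{(\Actions\cup\S_\barrier)_\indexed}$ lies in $(\Actions\times T)^*$, with $\tau_2=\varepsilon$ handled by $T'=\emptyset$. The combined predicate $\syncPred[+\barrier]$ therefore coincides with $\syncPred$ on these traces. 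Consequently, soundness of $(\F,G'')$ is literally the soundness of~$\F$, and $C$ is coverable iff $(\F,G'')$ is unsound. This gives a polynomial reduction from coverability to the (complement of the) soundness problem of natural reduction specifications, which is the intended reading given the direction stated in \cref{thm:atomic-blocks-coverability}.

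The only mildly delicate point is this last unfolding of the inductive definition of $\syncPred[\barrier]$ for $\barrier$-free traces, in particular the base case with $T'=\emptyset$ and $\tau_2=\varepsilon$. Everything else is inherited from \cref{thm:atomic-blocks-coverability}.
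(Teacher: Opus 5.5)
Your proposal is correct and follows the paper's approach: the paper's proof is just ``follows directly from \cref{thm:atomic-blocks-coverability}'', and your step of viewing the atomic fusion as a natural reduction specification with the empty sync-point instrumentation ($M=\emptyset$), and checking that $\lang{\addbarriers{\fuseblocks{P'}{\F}}{G''}}=\lang{\fuseblocks{P'}{\F}}$, is exactly what that one-line proof leaves implicit. You also rightly note two things: the reduction really targets the complement (unsoundness), and the base case of $\syncPred[\barrier]$ requires reading the definition so that $\sync[\barrier,\emptyset]{\varepsilon}$ holds, which is the intended meaning.
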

\begin{proof}
  Follows directly from~\cref{thm:atomic-blocks-coverability}.
\end{proof}

\section{On the Incompleteness of Mover Reasoning}
\label{sec:mover-incomplete}
There is a wealth of literature on sound reductions given by atomic blocks.
This literature builds on the seminal work of Lipton~\cite{lipton75:movers}
and the idea of \emph{movers}.
We show that, while sound, mover reasoning is incomplete (i.e., fails to admit some sound atomic fusions).
Completeness requires moving from mostly local reasoning (as in Lipton's rule)
to a global view (as in our algorithm).
A similar observation also holds for sync-points.

We begin by introducing movers and Lipton's rule for atomic blocks,
reformulated in our formal setting.
As before, we assume a program~$P=((\emptyset,\syncPred[\top]),G)$ and a commutativity relation~$I\subseteq \alphabet{G}^2$.
\begin{definition}[\cite{lipton75:movers}]
  An action~$a\in\alphabet{G}$ is:
  \begin{itemize}
  \item a \emph{left-mover} if $(b,a)\in I$ for all $b\in\alphabet{G}$,
  \item a \emph{right-mover} if $(a,b)\in I$ for all $b\in\alphabet{G}$,
  \item a \emph{both-mover} if it is both a left-mover and a right-mover,
  \item and a \emph{non-mover} if it is neither a left-mover nor a right-mover.
  \end{itemize}
\end{definition}
Lipton's rule is the following result:
\begin{proposition}[\cite{lipton75:movers}]
  An atomic fusion~$(G',(\beta_1,G_{\beta_1}), \ldots, (\beta_n,G_{\beta_n}))$ is sound
  if every $\tau\in \traces{G_{\beta_k}}$ (for every~$k$) can be written as $\tau=\tau_r\, a\, \tau_l$,
  for a  sequence of right-movers~$\tau_r$, some action~$a$ and a sequence of left-movers~$\tau_l$.
\end{proposition}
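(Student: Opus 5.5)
The plan is to reduce Lipton's rule to \cref{conj:polynomial-sound-atomic}: it suffices to show that if every $\tau\in\traces{G_{\beta_k}}$ has the form $\tau_r\,a\,\tau_l$, then there is no $z_1\ldots z_m\in\traces{G_{\beta_k}}$ with indices $i<j$ such that $z_i\rightarrowtail z_j$. We argue by contradiction. Fix such a trace, write it as $\tau_r\,a\,\tau_l$, and let $p$ be the position of the non-mover action~$a$. Since $i<j$, either $i<p$ or $j>p$ (or both). So at least one of $z_i$, $z_j$ is a genuine mover in the required direction: if $i<p$, then $z_i$ lies in $\tau_r$ and is a right-mover, and if $j>p$, then $z_j$ lies in $\tau_l$ and is a left-mover.

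The key step is to use the shape of~$\rightarrowtail$. Recall that ${\rightarrowtail} = \compl{I}\comp\big((\progOrderRel\cup\atRel)\comp\compl{I}\big)^+$. So $z_i\rightarrowtail z_j$ begins with a pair $(z_i,a_1)\notin I$ and ends with a pair $(b_p,z_j)\notin I$.

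If $z_i$ is a right-mover, then by definition $(z_i,b)\in I$ for every $b\in\alphabet{G}$. In particular $(z_i,a_1)\in I$, which contradicts the first pair. Symmetrically, if $z_j$ is a left-mover, then $(b,z_j)\in I$ for all~$b$, so $(b_p,z_j)\in I$, which contradicts the last pair.

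One subtlety must be checked: in the case split on $p$, a right-mover needs $(z_i,b)\in I$ for all~$b$, including actions of the same thread. This is fine, because the definition of mover quantifies over all of~$\alphabet{G}$, so the first and last $\compl{I}$ steps are ruled out unconditionally. The chain in between, with its $\progOrderRel$ and $\atRel$ steps and all the global dependencies, never needs to be inspected. Hence no witness $z_i\rightarrowtail z_j$ exists, and \cref{conj:polynomial-sound-atomic} yields soundness.

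The only real obstacle is the bookkeeping of the case analysis on where $i$ and $j$ fall relative to~$p$. Concretely, we must make sure that the case $i<p<j$ and the case $i=p<j$ are both covered. They are: $i=p$ forces $j>p$, so $z_j$ is a left-mover, and $j=p$ forces $i<p$, so $z_i$ is a right-mover. Beyond that, the result follows directly from the already established characterization, so no interleaving manipulation in the style of the original Lipton argument is needed.
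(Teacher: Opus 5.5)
Your argument is correct, but it takes a different route from the paper. The paper gives no proof of this proposition and simply attributes it to Lipton. Lipton's classical argument works directly on interleavings: in each non-atomic occurrence of a block, the right-movers before the pivot are swapped rightwards and the left-movers after it leftwards until the block is contiguous.

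You instead derive the rule as a corollary of the sufficiency direction of \cref{conj:polynomial-sound-atomic}. For $i<j$ and pivot position $p$, either $i<p$, so $z_i$ is a right-mover and the first link $(z_i,a_1)\in\compl{I}$ of any $\rightarrowtail$-chain cannot exist. Or $j>p$, so $z_j$ is a left-mover and the last link $(b,z_j)\in\compl{I}$ cannot exist. This is legitimate because Section~5 fixes the trivial synchronization alphabet, which is the setting in which the characterization is proved. You should say so explicitly, since the characterization is not available otherwise.

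What your route buys is brevity and an explanation of Lipton's incompleteness. The rule only ever cuts a dependence chain at one of its endpoints and never inspects the $\progOrderRel$/$\mathsf{at}$ links in between. In \cref{fig:lipton-incomplete}, by contrast, both endpoints have conflicting partners ($c$ for $b_1$ and $a$ for $b_2$), and the chain is broken only in the middle. The price is that all the real work is delegated to the long inductive sufficiency proof in the appendix, and the derived result is confined to unsynchronized programs. Lipton's direct argument is self-contained and local. Because it only moves the block's own non-synchronization actions, it is not tied to the trivial alphabet; for instance, it adapts to locks, where the paper has no characterization and the exact problem is {\sc coNP}-hard.

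Two harmless slips: the pivot $a$ need not be a non-mover, and the ``same-thread'' subtlety is moot, since neither the mover definitions nor $\compl{I}$ refer to threads.
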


While Lipton's rule is sufficient to ensure soundness, it is not a necessary condition,
as demonstrated by the following example.
\begin{figure}[t]
\tikzstyle{st}=[draw,circle]
\begin{subfigure}{0.45\textwidth}
\centering
\begin{tikzpicture}[thick]
  \node[st] (l0) {};
  \node[st,right of=l0] (l1) {};
  \node[st,right of=l1,double] (l2) {};

  \draw[<-] (l0) -- ++(left:0.5);
  \draw[->] (l0) edge[bend left=65] node[auto]{$a$} (l2);
  \draw[->] (l0) -- node[auto]{$b_1$} (l1);
  \draw[->] (l1) -- node[auto]{$b_2$} (l2);
  \draw[->] (l0) edge[bend right=65] node[auto,swap]{$c$} (l2);
\end{tikzpicture}
\caption{The atomic fusion of $b_1b_2$ is sound wrt.\ $I=\{a,b_1,b_2,c\}^2\setminus\{(a,b_2),(b_1,c)\}$.}
\label{fig:lipton-incomplete}
\end{subfigure}
\hfill
\begin{subfigure}{0.5\textwidth}
\centering
\begin{tikzpicture}[thick]
  \node[st] (l0) {};
  \node[st,right of=l0]  (l1)  {};
  \node[st,right of=l1]  (l1') [densely dotted] {};
  \node[st,right of=l1'] (l2)  {};
  \node[st,right of=l2]  (l2') [densely dotted] {};
  \node[st,right of=l2',double] (l3) {};

  \draw[<-] (l0) -- ++(left:0.5);
  \draw[->] (l0) -- node[auto]{$a$} (l1);
  \draw[->,densely dotted] (l1) -- node[auto]{$\barrier$} (l1');
  \draw[->] (l1') -- node[auto]{$b$} (l2);
  \draw[->,densely dotted] (l2) -- node[auto]{$\barrier$} (l2');
  \draw[->] (l2') -- node[auto]{$c$} (l3);
\end{tikzpicture}
\caption{The dotted sync-point instrumentation is sound wrt.~$I=\{a,b,c\}^2\setminus\{(b,b),(c,c)\}$.}
\label{fig:barrier-movers-incomplete}
\end{subfigure}
\caption{Examples for incompleteness of mover reasoning.}
\end{figure}
\begin{example}
  Consider the thread template in~\cref{fig:lipton-incomplete},
  over $\Sigma=\{a,b_1,b_2,c\}$.
  Assume the given commutativity relation is  $I = \Sigma^2 \setminus \{(a, b_2), (b_1, c)\}$.

  Then, $b_1$ is a left-mover (but not a right-mover),
  whereas $b_2$ is a right-mover (but not a left-mover).
  Hence, Lipton's rule is not sufficient to conclude that $b_1 b_2$ can be fused in an atomic block.
  Yet, it \emph{is} sound to make $b_1b_2$ atomic.
  In any interleaving, all occurrences of~$a$ can be commuted to the beginning,
  and all occurrences of~$c$ can be commuted to the end of the interleaving.
  The remaining interleaved occurrences of~$b_1,b_2$ (from different threads)
  commute freely against each other,
  and can be reordered such that each~$\langle b_1 b_2:i\rangle$ executes atomically.

  Going further, if we consider $I'=I\setminus\{(b_1, b_2)\}$,
  the mover properties of all actions remain unchanged.
  Yet, the atomic fusion of $b_1b_2$ is unsound wrt.~$I'$:
  the interleaving $\langle b_1:1 \rangle\langle b_1:2 \rangle\langle b_2:1 \rangle\langle b_2:2 \rangle$ has no representative in which $\langle b_1b_2:1\rangle$ and $\langle b_1b_2:2\rangle$ execute atomically.
  Hence, mover properties are generally insufficient to characterize sound atomic fusions (even beyond Lipton's rule).
\end{example}
By contrast, our algorithm is sound and complete;
but it requires a global (rather than local) view that takes e.g.~control flow into account.

As a notable aside, in the case where $I$~is symmetric (every left- or right-mover is a both-mover),
Lipton's rule is indeed complete for parameterized programs.
\begin{propositionrep}
  Let $I$ be symmetric. If $\F=(G',(\beta_1,G_{\beta_1}), \ldots, (\beta_n,G_{\beta_n}))$ is sound wrt.~$I$,
  then all $\tau\in \traces{G_{\beta_k}}$ (for every~$k$) can be written as $\tau=\tau_r\, a\, \tau_l$,
  for a sequence of both-movers~$\tau_r$, some action~$a$, and a sequence of both-movers~$\tau_l$.
\end{propositionrep}
\begin{proofsketch}
  Assuming an atomic block that is not in Lipton form,
  we construct an interleaving of four threads (two of which execute the atomic block),
  such that no representative of this interleaving executes the block atomically.
  The construction makes use of the fact that the program is parameterized (in the number of threads) and that there is no synchronization between threads.
\end{proofsketch}
\begin{proof}
  We have to show that if there is an atomic block that is not in Lipton form,
  then there exists an interleaving without representative in~$\lang{\fuseblocks{P}{\F}}$.
  Hence, consider a program~$P$ and an atomic block that is not in Lipton form, i.e., it is of the form $\theta a_1 \eta a_2 \sigma$ with $a_1$ not a right-mover and $a_2$ not a left-mover.
  In other words, there exist actions~$b$ and~$c$ with $(a_1,b)\notin I$ and $(c,a_2)\notin I$.

  Consider now the interleaving $\tau = \langle a_1 \eta:1\rangle \langle b:3 \rangle \langle a_1\eta a_2:2 \rangle \langle c:4\rangle \langle a_2:1\rangle$.
  Such an interleaving can always be found in some trace of~$P$, by letting each thread run to the appropriate point first.
  In all traces $\tau'$ with $\tau \coveredby{I} \tau'$, we have that:
  \begin{itemize}
  \item $\langle a_1:1\rangle$ occurs before $\langle b:3 \rangle$,
    as $(a_1, b)\notin I$ by assumption.
  \item $\langle b:3\rangle$ occurs before $\langle a_1:2\rangle$, as \textbf{by symmetry}, $(b,a_1)\notin I$.
  \item $\langle a_1:2\rangle$ occurs before $\langle a_2:2\rangle$, by program order.
  \item $\langle a_2:2\rangle$ occurs before $\langle c:4\rangle$, as \textbf{by symmetry}, $(a_2, c)\notin I$.
  \item $\langle c:4\rangle$ occurs before $\langle a_2:1\rangle$, as $(c,a_2)\notin I$ by assumption.
  \end{itemize}
  Hence, no such $\tau'$ executes $\theta a_1 \eta a_2 \sigma$ atomically, and $\tau$ does not have a representative in $\lang{\fuseblocks{P}{\F}}$.
\end{proof}

We conclude this section by showing that mover reasoning is similarly unable to provide a characterization of sound sync-point instrumentations.
\begin{example}
  Consider the thread template and sync-point instrumentation shown in~\cref{fig:barrier-movers-incomplete},
  over actions $\Sigma=\{a,b,c\}$.
  This sync-point instrumentation is sound wrt.\ the commutativity relation  $I=\Sigma^2 \setminus \{(b,b),(c,c)\}$.
  However, this fact cannot be justified purely with mover properties.
  In this example, we have that $a$~is a both-mover, whereas $b$~and~$c$ are non-movers.
  If we consider a different commutativity relation~$I'=\Sigma^2\setminus\{(b,c),(c,b)\}$,
  the mover properties remain unchanged,
  yet the sync-point instrumentation is unsound wrt.~$I'$.
\end{example}
Unlike for atomic blocks, the inability of mover properties to characterize sound sync-point instrumentations thus persists even in the case of symmetric commutativity
(both $I$~and~$I'$ in the example above are symmetric).

\section{Reduction Soundness in the Presence of Locks}
\label{sec:soundness-with-locks}
Ignoring all synchronization between threads is a rather coarse abstraction.
One may be tempted to allow for at least some light-weight synchronization mechanisms to be considered during reduction checking.
For instance, locks (mutexes) are such a light-weight mechanism,
compared to more powerful synchronization via e.g.\ (global) boolean variables, broadcast, etc.
Locks are also ubiquitously used in concurrent programs.
However, we show in this section that allowing locks immediately leads to {\sc coNP}-hardness for the soundness problem of natural reductions, even in the simplest cases.

\paragraph{Atomic Blocks}
As shown in~\cref{thm:atomic-blocks-coverability},
coverability over programs with a certain synchronization alphabet
is polynomial-time reducible to the \emph{un}soundness of atomic fusions (with only a single atomic block) for the same class of programs.
For the case of synchronization via locks, we show:
\goodbreak
\begin{theoremrep}
  \label{thm:locks-cover-NP}
  The coverability problem over programs with locks is NP-hard.
\end{theoremrep}
\begin{proofsketch}
  We reduce 3-\textsc{Sat} to reachability in a bounded-thread program with locks, and then to coverability in a parameterized program with locks.
  For the latter step, we construct a thread template where each thread of the bounded-thread program is a separate branch, protected by a dedicated lock to ensure this branch can only be taken by one thread instance.
  For the reduction from 3-\textsc{Sat}, we introduce one lock~$m_i^l$ for each pair of a literal~$l$ over the propositional variables and a clause~$i$.
  There is one thread for each clause~$i$, which consists of choosing one of the literals~$l$ in the clause to satisfy,
  acquiring the lock~$m_i^l$, checking that no thread for some other clause~$j$ has made a contradictory choice (i.e., $m_j^{\lnot l}$~must be free), and reaching some location~$\ell_i$.
  The configuration of all~$\ell_i$ is reachable if and only if the conjunction of all clauses is satisfiable.
\end{proofsketch}
\begin{proof}
  
Let us first note that in a program that synchronizes only over locks,
coverability of a configuration is equivalent to reachability of the exact same configurations:
additional threads can only impede an execution (by acquiring locks) but not enable additional behaviours.

As described in the proof sketch, we reduce \textsc{3-Sat} to reachability in a bounded-thread program, and then reduce that reachability problem to coverability in the parameterized case.
We focus here on the reduction from \textsc{3-Sat}.

Given a \textsc{3-sat} formula $\varphi \equiv C_1 \land \ldots \land C_n$
with clauses $C_1,\ldots, C_n$ over propositional variables $\{x,y,\ldots\}$,
we construct a program with $n$ threads, one for each clause.
For each literal $l\in\{x,\lnot x,y,\lnot y, \ldots\}$,
we introduce $n$~locks $m^l_i$ with $i=1,\ldots,n$.
The idea is that $m^l_i$~being held (by thread~$i$, corresponding to~$C_i$) indicates that thread~$i$ has the "opinion" that $l$~evaluates to true.
The thread templates ensure that, in order to reach a certain configuration, all threads must have a consistent opinion, which corresponds to a (partial) satisfying assignment for~$\varphi$.

An example of such a thread template is shown in~\cref{fig:3sat-coverability-example}.
Generally, the thread template for a thread~$i$ consists of 3~branches, one for each literal~$l$ in~$C_i$.
In each branch, the thread first acquires the respective lock~$m^l_i$.
It then successively checks that all the locks~$m^{\lnot l}_j$ for $j\neq i$ are free, by acquiring and then releasing them.
Afterwards, it goes to a dedicated location $\ell_i$ (common end point of all branches).

\begin{figure}
\centering
\begin{tikzpicture}[thick]
  \node[draw,circle] (init) {};
  \node[draw,circle,below=1.5cm of init,xshift=-1.5cm] (x1) {};
  \node[draw,circle,below=1.5cm of init] (y1) {};
  \node[draw,circle,below=1.5cm of init,xshift=+1.5cm] (z1) {};

  \node[draw,circle,below of=x1] (x2) {};
  \node[draw,circle,below of=y1] (y2) {};
  \node[draw,circle,below of=z1] (z2) {};

  \node[draw,circle,below of=x2] (x3) {};
  \node[draw,circle,below of=y2] (y3) {};
  \node[draw,circle,below of=z2] (z3) {};

  \node[draw,circle,below of=x3] (x4) {};
  \node[draw,circle,below of=y3] (y4) {};
  \node[draw,circle,below of=z3] (z4) {};

  \node[draw,circle,below=1.5cm of y4,label={[font=\scriptsize]below:$\ell_i$}] (li) {};

  \draw[<-] (init) -- ++(up:7mm);
  \draw[->] (init) edge[bend right=45] node[auto,swap,font=\scriptsize]{$\mathit{acq}(m^x_2)$} (x1);
  \draw[->] (init) -- node[auto,font=\scriptsize]{$\mathit{acq}(m^{\lnot y}_2)$} (y1);
  \draw[->] (init) edge[bend left=45] node[auto,font=\scriptsize]{$\mathit{acq}(m^{\lnot z}_2)$} (z1);

  \draw[->] (x1) -- node[auto,font=\scriptsize]{$\mathit{acq}(m^{\lnot x}_1)$} (x2);
  \draw[->] (y1) -- node[auto,font=\scriptsize]{$\mathit{acq}(m^{y}_1)$} (y2);
  \draw[->] (z1) -- node[auto,font=\scriptsize]{$\mathit{acq}(m^{z}_1)$} (z2);

  \draw[->] (x2) -- node[auto,font=\scriptsize]{$\mathit{rel}(m^{\lnot x}_1)$} (x3);
  \draw[->] (y2) -- node[auto,font=\scriptsize]{$\mathit{rel}(m^{y}_1)$} (y3);
  \draw[->] (z2) -- node[auto,font=\scriptsize]{$\mathit{rel}(m^{z}_1)$} (z3);

  \draw[->] (x3) -- node[auto,font=\scriptsize]{$\mathit{acq}(m^{\lnot x}_3)$} (x4);
  \draw[->] (y3) -- node[auto,font=\scriptsize]{$\mathit{acq}(m^{y}_3)$} (y4);
  \draw[->] (z3) -- node[auto,font=\scriptsize]{$\mathit{acq}(m^{z}_3)$} (z4);

  \draw[->] (x4) edge[bend right=45] node[auto,swap,font=\scriptsize]{$\mathit{rel}(m^{\lnot x}_3)$} (li);
  \draw[->] (y4) -- node[auto,font=\scriptsize]{$\mathit{rel}(m^{y}_3)$} (li);
  \draw[->] (z4) edge[bend left=45] node[auto,font=\scriptsize]{$\mathit{rel}(m^{z}_3)$} (li);
\end{tikzpicture}
\caption{
  Thread template for a clause $C_2\equiv (x \lor \lnot y \lor \lnot z)$ in $\varphi \equiv C_1\land C_2\land C_3$.
}
\label{fig:3sat-coverability-example}
\end{figure}

\begin{lemma}
  \label{lem:3sat-coverability-equiv}
  The configuration $[\ell_1,\ldots,\ell_n]$ is coverable (resp.\ reachable) in this program
  if and only if the \textsc{3-sat} formula~$\varphi$ is satisfiable.
\end{lemma}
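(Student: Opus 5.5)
The lemma has two directions, and I will prove each by an explicit construction. In both, the key fact is the meaning of lock $m^l_i$: thread $i$ takes the branch for literal $l$ by acquiring $m^l_i$ and never releasing it, so while thread $i$ sits at $\ell_i$ it holds $m^l_i$ permanently.

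\textbf{Satisfiable implies reachable.} Fix a satisfying assignment $\alpha$ and, for each clause $C_i$, choose a literal $l_i\in C_i$ with $\alpha(l_i)=\top$.
\begin{enumerate}
\item Run the threads sequentially, $1,\ldots,n$. Thread $i$ takes the branch for $l_i$ and acquires $m^{l_i}_i$.
\item Thread $i$ then acquires and releases each $m^{\lnot l_i}_j$ for $j\neq i$, and finally reaches $\ell_i$.
\item The only locks held at any time are the permanently held $m^{l_{j}}_{j}$ from step 1. Thread $i$ tries to acquire $m^{\lnot l_i}_j$, which is blocked only if $l_j=\lnot l_i$.
\item That would require both $l_i$ and $\lnot l_i$ to be true under $\alpha$, which is impossible. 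Every acquire therefore succeeds.
\end{enumerate}
The resulting trace satisfies $\sync[\locks]{\cdot}$ and reaches $[\ell_1,\ldots,\ell_n]$.

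\textbf{Reachable implies satisfiable.} Take a lock-respecting trace reaching $[\ell_1,\ldots,\ell_n]$.
\begin{enumerate}
\item Thread $i$ passed through exactly one branch, for some literal $l_i\in C_i$. It acquired $m^{l_i}_i$ at some time $s_i$ and holds it from $s_i$ to the end.
\item I claim the chosen literals are consistent: there are no $i\neq j$ with $l_j=\lnot l_i$.
\item Suppose otherwise, and assume $s_i<s_j$ by symmetry. Thread $j$ acquires $m^{\lnot l_j}_i=m^{l_i}_i$ at some point after $s_j$.
\item That point lies after $s_i$, when thread $i$ still holds $m^{l_i}_i$, violating $\syncPred[\locks]$. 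Hence no such pair exists.
\item The set $\{l_1,\ldots,l_n\}$ is consistent, so any assignment extending it satisfies every clause $C_i$.
\end{enumerate}

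\textbf{Coverable versus reachable, and the parameterized step.} The lemma concerns a bounded program with $n$ threads. Coverability and reachability of $[\ell_1,\ldots,\ell_n]$ coincide, by the remark earlier in the proof that extra threads only acquire more locks and so cannot enable new behaviour.

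To turn this into a parameterized program, put each thread template in its own branch, guarded by a dedicated lock that is acquired and never released. At most one instance can then take each branch, so the same argument applies.

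The main obstacle is keeping the timing argument in step 3 of the second direction precise. It uses that acquire-and-release of $m^{\lnot l}_j$ happens after $m^l_i$ is acquired within a branch, which holds by construction of the template. Exiting to $\ell_\exit$ after $\ell_i$ is harmless, since $m^{l_i}_i$ stays held either way.
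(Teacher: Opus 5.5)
Your proof is correct and follows essentially the same argument as the paper. For satisfiability implying reachability you run the threads sequentially, each picking a literal true under the assignment. For the converse you show that of two threads with contradictory choices, the one acquiring its own lock later would block on the other's still-held lock, so the chosen literals form a consistent partial assignment; your remarks on coverability versus reachability and on dedicated branch locks also match the surrounding proof of the theorem.
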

\begin{proof}
  For the forwards direction, take a run that reaches the configuration $[\ell_1, \ldots, \ell_n ]$.
  In the final state, each thread~$i$ only holds some lock $m^x_i$ or $m^{\lnot x}_i$ (but never both at the same time).
  Moreover, it cannot be that $m^x_i$~and $m^{\lnot x}_j$ are both held (by threads~$i$ and~$j$, respectively):
  If wlog.\ thread~$i$ acquires $m^x_i$ before thread~$j$ acquires $m^{\lnot x}_j$,
  then thread~$j$ would block when trying to acquire $m^x_i$, and would not reach $\ell_j$.
  We define a satisfying partial assignment~$\rho$ for the CNF formula~$\varphi$ by setting $\rho(x)=\top$ if $m^x_i$ is held (by some thread~$i$),
  and $\rho(x)=\bot$ if $m^{\lnot x}_j$ is held (by some thread~$j$).
  By the considerations above, the partial assignment is well-defined.
  As each thread~$i$ holds the lock~$m^l_i$ for one of the literals~$l$ in $C_i$, this partial assignment satisfies each clause $C_i$.

  For the reverse direction, take a satisfying assignment~$\rho$ of the CNF formula~$\varphi$.
  We construct a run reaching $\langle \ell_1, \ldots, \ell_n \rangle$ by running the threads sequentially, one after the other.
  In particular, we let each thread as its first step chose some literal~$l$ that is satisfied by~$\rho$.
  Note that each thread~$j$ that has reached~$\ell_j$ only holds a single lock of the form $m^{l'}_j$.
  Hence, a thread~$i$ (with $i\neq j$) can always (after chosing a literal~$l$) execute its first lock acquisition of~$m^l_i$.
  Then, the thread~$i$ proceeds to test the availability of the locks $m^{\lnot l}_j$ corresponding to the negated literal (which is not satisfied by $\rho$).
  These are always available: if thread~$j$ ran before (and has already reached $\ell_j$), it may hold $m^l_j$ but not $m^{\lnot l}_j$ (by construction of our run);
  and if thread~$j$ did not run before, both locks~$m^l_j$ and~$m^{\lnot l}_j$ are available.
  Hence the constructed run is feasible wrt.\ the lock semantics, and it clearly reaches the configuration $\langle \ell_1, \ldots, \ell_n \rangle$.
  \qed (\cref{lem:3sat-coverability-equiv})
\end{proof}

As \textsc{3-sat} is well-known to be NP-complete,
we get that coverability over a program with locks is NP-hard.
\qed

\end{proof}

\begin{corollary}
  \label{corr:fusion-soundness-locks-hard}
  Atomic fusion soundness in programs with locks is {\sc coNP}-hard.%
\end{corollary}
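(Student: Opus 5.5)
The plan is to compose the two preceding results. \Cref{thm:locks-cover-NP} shows that coverability over parameterized programs with the lock synchronization alphabet $(\S_\locks,\syncPred[\locks])$ is NP-hard. \Cref{thm:atomic-blocks-coverability}, instantiated for this alphabet, gives a polynomial-time many-one reduction from coverability to the problem of deciding whether an atomic fusion is \emph{un}sound. Composing the two polynomial-time reductions, deciding unsoundness of atomic fusions in programs with locks is NP-hard. Since soundness is the complement of unsoundness, with the same inputs and the answer flipped, soundness is {\sc coNP}-hard.

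Concretely, I would start from a 3-\textsc{Sat} formula $\varphi$. The construction in the proof of \cref{thm:locks-cover-NP} yields a program $P$ with locks and a configuration $C=[\ell_1,\ldots,\ell_n]$ such that $C$ is coverable if and only if $\varphi$ is satisfiable. Applying the construction in the proof of \cref{thm:atomic-blocks-coverability} to $(P,C)$ then yields a program $P'$ over the \emph{same} synchronization alphabet, a single-block atomic fusion $\F$, and a relation $I$, such that $\F$ is unsound if and only if $C$ is coverable. Hence $\F$ is sound if and only if $\varphi$ is unsatisfiable, which is a reduction from the coNP-complete problem \textsc{Unsat}.

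The only point requiring care is that the construction of \cref{thm:atomic-blocks-coverability} really keeps us within programs with locks. I would check the following:
\begin{itemize}
\item The added edges are labeled with fresh ordinary actions in $\Actions$, not synchronization actions.
\item The block body $G_\beta$ contains no synchronization actions, as the definition of atomic fusion requires.
\item The synchronization predicate is unchanged, so $P'$ uses only $\syncPred[\locks]$.
\end{itemize}

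One more detail concerns the paper's assumption that every location can reach $\ell_\exit$. Threads stuck at $\ell_i$ while holding locks must still be able to terminate. The added edges $(\ell_i,c_i,\ell_{ab})$ make $\ell_{ab}$ reachable without releasing locks, and $\syncPred[\locks]$ permits a trailing unreleased acquisition per lock. So the unsoundness witness $\tau\,a\,c_1\ldots c_n\,b$ is a valid interleaving, and I expect no real obstacle there. Finally, both constructions are clearly polynomial in the size of $\varphi$.
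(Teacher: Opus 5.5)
Your proposal is correct and matches the paper's proof: the paper derives the corollary directly from \cref{thm:atomic-blocks-coverability} and \cref{thm:locks-cover-NP}, composing the two polynomial-time reductions so that unsoundness is NP-hard and soundness is therefore {\sc coNP}-hard. Your extra checks are sanity conditions the paper leaves implicit: the added edges carry fresh non-synchronization actions, the block body is synchronization-free, the lock predicate is unchanged, and lock-holding threads can still terminate via the $c_i$ edges.
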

\begin{proof}
  Follows directly from~\cref{thm:atomic-blocks-coverability} and \cref{thm:locks-cover-NP}.
\end{proof}

The corresponding upper-bound is (to our knowledge) an open question.

Furthermore, we note that previous work~\cite{farzan:predicting-atomicity-violations} has shown {\sc coNP}-hardness for a special case of atomic fusion soundness (\emph{conflict serializability})
in programs with a bounded number of threads (even without synchronization).
Interestingly, this previous result provides for an alternative proof of \cref{corr:fusion-soundness-locks-hard}:
We can reduce the soundness problem in parameterized programs to the bounded case, by using dedicated locks to limit the number of running threads.
Hence, while the shift from a bounded number of threads to parameterized programs significantly simplifies the soundness problem in the absence of synchronization,
lock synchronization yields back the hardness.
\begin{toappendix}
  Below we sketch the alternative proof of \cref{corr:fusion-soundness-locks-hard} discussed in \cref{sec:soundness-with-locks}.
\begin{proof}[Alternative proof of \cref{corr:fusion-soundness-locks-hard}.]
  We appeal to the {\sc NP}-complexity result for non-serializability of regular programs (a special case of atomic fusion unsoundness) over programs with a bounded number of threads
  shown in~\cite[Theorem 2]{farzan:predicting-atomicity-violations}.
  Similar to the proof of~\cref{thm:locks-cover-NP},
  we use the fact that we can transform a bounded-thread program with locks to a parameterized program with locks,
  by adding new lock variables $m_1,\ldots,m_k$ for the $k$ threads,
  and then combining the different thread templates as separate, completely disjoint branches in a single thread template.
  The branch for each thread~$i$ is protected by the lock~$m_i$, which is acquired initially but never released.
  Hence, there can be at most one thread instance executing the code of each thread from the bounded-thread program;
  any further threads of the parameterized program are stuck in their initial location and cannot execute any steps.
\end{proof}
\end{toappendix}

\paragraph{Synchronous reductions.}
For synchronous reductions (resp.\ sync-point instrumentations), we have not shown a general result relating coverability (for configurations of arbitrary-width) and reduction checking.
In fact, the algorithm in \cref{sec:check-barriers} at first suggests that only pairwise checks are needed (for the computation of the phase-order).
Yet, the difficulty lies in the fact that sync-points themselves are synchronization actions, and may interfere with other synchronization mechanisms such as locks.
Using this observation, we show:
\begin{theoremrep}
  The coverability problem of programs with locks is polynomial-time reducible to the sync-point soundness problem (of programs with locks).
\end{theoremrep}
\begin{proofsketch}
  Given a program~$P$ and a configuration~$C=[\ell_1, \ldots, \ell_n]$ for which to decide coverability,
  we modify the thread template as follows.
  We take fresh locks~$m_1,\ldots,m_n$, add a fresh location~$\ell_\exit'$, and chose~$\ell_\exit'$ as the new exit location.
  From each location~$\ell_i\in C$, we add a subgraph that first acquires~$m_i$, then branches over all~$j\neq i$ and checks that~$m_j$ is still free (by acquiring and releasing it),
  and then finally releases $m_i$, reaching~$\ell_\exit'$.
  For the sync-point instrumentation, we add a single sync-point~$\barrier$ immediately before the final step (releasing $m_i$).
  We note that~$C$ is coverable (in~$P$) if and only if it is reachable.
  Any interleaving~$\tau$ of~$P$ reaching a sub-configuration of~$C$ is an interleaving of the modified program (all threads can reach~$\ell_\exit'$).
  However, in an interleaving reaching exactly~$C$,
  the sync-point instrumentation forces at least one of the $n$~threads to deadlock when trying to acquire some~$m_j$ (and hence, all other threads are blocked at the sync-point).
  Thus, the sync-point instrumentation is sound (wrt.\ any~$I$) if and only if $C$~is not coverable.
\end{proofsketch}
\begin{proof}
  
Given a program $P$ with locks and a multiset of locations~$C=[\ell_1,\ldots,\ell_n]$ (where $n>1$),
take fresh locks $m_1,\ldots,m_n$ and a fresh exit location $\ell_\exit'$,
and extend the thread template by adding the following code as a branch from $\ell_i$:
\[
  \acq{m_i};\ \texttt{choose }j\in\{1,\ldots,n\}\texttt{ with } j\neq i \texttt{ {\Large\textbraceleft} } \acq{m_j};\ \rel{m_j} \texttt{ {\Large\textbraceright} };\ \rel{m_i}; \ \texttt{goto }\ell_\exit'
\]
In the control flow graph, the \texttt{choose} block is represented by $n-1$ branches, as seen in~\cref{fig:barriers-locks-reduction}.

\begin{figure}
\centering
\begin{tikzpicture}[node distance=1.75cm,thick]
  \node[draw,circle,label=165:$\ell_i$] (li) {};
  \node[draw,circle,label=165:$\ell_i^L$,below of=li]  (li1) {};
  \node[draw,circle,label=165:$\ell_i^{j,L}$,below left of=li1] (lij1) {};
  \node[draw,circle,label=195:$\ell_i^U$,below right of=lij1] (li3) {};
  \node[draw,double,circle,label=195:$\ell_\exit'$,below of=li3] (lex) {};
  \node[below right of=li1] (lin1) {\ldots};
  
  \draw[->] (li)  -- node[auto]{$\acq{m_i}$} (li1);
  \draw[->] (li1) edge[bend right=40] node[auto,swap]{$\acq{m_j}$} (lij1);
  \draw[->] (lij1) edge[bend right=40] node[auto,swap]{$\rel{m_j}$} (li3);
  \draw[->] (li1) edge[bend left=40] (lin1);
  \draw[->] (lin1) edge[bend left=40] (li3);
  \draw[->] (li3) -- node[auto]{$\rel{m_i}$} (lex);
\end{tikzpicture}
\caption{Thread template for the reduction of coverability with locks to soundness of sync-point instrumentation.}
\label{fig:barriers-locks-reduction}
\end{figure}

The idea is that if $C$ is coverable (i.e. reachable) with some interleaving~$\tau$,
then $\tau$ is contained in the language of the program $\hat{P}$ resulting from the above modification.
This is possible, because after reaching $C$, each thread $i$ can one-by-one go from $\ell_i$ to $\ell_\exit'$ completely (choosing any branch) without interruption; at the end all locks $m_i$ are free again (as they are after $\tau$, since they are fresh).
Note furthermore that in addition, $\hat{P}$ also includes other interleavings,
namely all interleavings $\tau'$ that reach a sub-configuration of $C$ (and only those interleavings).

Now consider the sync-point instrumentation where we add a single sync-point~$\bullet$ at each location $\ell_i^{U}$, just before $\rel{m_i}$.
An interleaving $\tau$ reaching exactly $C$ is no longer in the language of the instrumented program:
as all $n$ threads must pass the sync-point together, they wait at $\ell_i^U$ until all $n$ threads have reached this location.
However, the last thread that moves from $\ell_i$ to $\ell_i^L$ will be stuck there;
all its branches are blocked because (since it is the last thread), all $m_j$ are already held by the respective threads.
Consequently, the other threads forever wait at $\ell_i^U$ and cannot pass the sync-point.

On the other hand, all other interleavings $\tau'$ in the language of $\hat{P}$ that reach only a strict sub-configuration $C'$ of $C$ are still possible with the sync-point instrumentation.
If for instance $C' \subseteq_{ms} C - [\ell_j]$, then all threads~$i$ can execute $\acq{m_i};\acq{m_j}\rel{m_j}$ in turn (without interruption), pass the sync-point together, and release their respective $m_i$.

Now let $I$ be any commutativity.
Then the described sync-point instrumentation is sound (wrt.~$I$)
iff $C$ is not coverable in the original program:
If $C$ is coverable/reachable by $\tau$, the sync-point instrumentation loses the only possible representative for~$\tau$ (all remaining traces reach a strict sub-configuration of~$C$, hence there must be fewer than~$n$ active threads, so these traces cannot cover~$\tau$).
If $C$ is not coverable, the sync-point instrumentation does not change the language.
\qed

\end{proof}

\paragraph{Using partial information about locks.}
We conclude this section with a short outlook on how soundness checking for natural reductions can nevertheless benefit from (possibly incomplete, but sound) information about locks.
Suppose that we have information available about a set of locks~$M_\ell$ that a thread \emph{must} hold whenever it is at location~$\ell$.
Such information could for instance be derived from a static analysis,
or a thread-modular proof given to a deductive verifier by a user.
Then one can soundly declare that outgoing actions~$a_\ell$ of~$\ell$ and $a_{\ell'}$ of~$\ell'$ commute
whenever the sets $M_\ell$ and $M_{\ell'}$ overlap.

More generally, any kind of \emph{thread-modular invariants}~$\varphi_{\ell},\varphi_{\ell'}$
that hold whenever a thread is at location~$\ell$ resp.~$\ell'$
can be used to increase commutativity,
by determining if actions~$a_\ell,a_{\ell'}$ commute \emph{under the assumption}~$\varphi_{\ell}\land\varphi_{\ell'}$~\cite{popl24:parameterized}.
The idea for locks described above corresponds to the special case that~$\varphi_{\ell}\land\varphi_{\ell'}\equiv \bot$
(it is impossible that both threads hold the same lock),
under which the actions vacuously commute.

The resulting extended commutativity relation~$I' \supseteq I$
can increase the power of both mover reasoning~\cite{lipton75:movers} (cf.~\cref{sec:mover-incomplete})
as well as our algorithms.

\section{Related Work}

Checking conflict serializability for transactions can be seen as an instance of deciding soundness of atomic blocks, where the commutativity relation is fixed: read accesses commute and (read or write) accesses to different shared variables also commute. The complexity of checking serializability has been studied in \cite{bouajjani:concurrent-progs-seq-specs,farzan:predicting-atomicity-violations}, but these results do not extend to arbitrary commutativity relations as in our paper.

Techniques for proving soundness of synchronous reductions which are sound but possibly incomplete have been studied in~\cite{gleissenthal:pretend-synchrony,DBLP:conf/pldi/KraglEHMQ20}. These techniques are based on the mover types from Lipton's reduction theory. 

The works of Farzan et al.~\cite{pldi22:sound-seq,popl24:parameterized} define an algorithmic verification framework for concurrent programs that includes computing sound reductions. The syntactic representation of these reductions is somewhat complex, which is fine in the context of a fully automated, non-interactive proof. In our paper, we look at classes of reductions which are humanly readable and use simple syntactic constructs.

\section{Conclusion}
We have proposed \emph{natural reductions} of parameterized concurrent programs,
an approach for users to specify a subset of interleavings that may be easier to verify.
Natural reductions can be specified with ease, by adding atomic blocks and global rendez-vous points to the thread template.

We have studied the problem of deciding whether a natural reduction given by a user is indeed \emph{sound},
and can thus be soundly verified in place of the original program.
We link the complexity of this problem to the corresponding coverability problem.
As a consequence, the problem becomes intractable ({\sc coNP}-hard) as soon as even very light-weight synchronization between threads is allowed (in particular, locks).

This motivates us to take an abstract view of the program,
i.e., to abstract away from the semantics of locks and other synchronization operators.
We present the first complete decision procedure for checking soundness of natural reductions in this setting,
and show that it runs in polynomial time.
Hence, our approach overcomes the inherent incompleteness of previous work based on \emph{mover reasoning},
while retaining polynomial complexity.

In the future, our decision procedure could replace resp.\ complement mover reasoning in reduction-based deductive verifiers such as Civl~\cite{DBLP:conf/cav/HawblitzelPQT15} or Anchor~\cite{flanagan:anchor},
broadening the range of specifiable reductions and providing a guarantee of completeness.
As part of such an application,
a relevant question for study is the extension of our polynomial-time algorithm to other settings.
In particular, it is of interest to determine whether some form of synchronization (even more light-weight than locks), such as \emph{structured concurrent programs}, \emph{permissions}, or specific \emph{locking disciplines} can be accommodated.
Given that in such cases, it is typically easy to determine which parts of the program may run in parallel (syntactically resp.\ through a type system),
there is some hope that polynomial runtime can be retained.

\bibliographystyle{splncs04}
\bibliography{references}

\begin{thebibliography}{10}
\providecommand{\url}[1]{\texttt{#1}}
\providecommand{\urlprefix}{URL }
\providecommand{\doi}[1]{https://doi.org/#1}

\bibitem{bouajjani:concurrent-progs-seq-specs}
Bouajjani, A., Emmi, M., Enea, C., Hamza, J.: Verifying concurrent programs
  against sequential specifications. In: {ESOP}. pp. 290--309. Lecture Notes in
  Computer Science, Springer (2013). \doi{10.1007/978-3-642-37036-6\_17}

\bibitem{DBLP:conf/osdi/ChajedKLZ18}
Chajed, T., Kaashoek, M.F., Lampson, B.W., Zeldovich, N.: Verifying concurrent
  software using movers in {CSPEC}. In: {OSDI}. pp. 306--322. {USENIX}
  Association (2018)

\bibitem{DBLP:conf/podc/ChouG88}
Chou, C., Gafni, E.: Understanding and verifying distributed algorithms using
  stratified decomposition. In: {PODC}. pp. 44--65. {ACM} (1988).
  \doi{10.1145/62546.62556}

\bibitem{clerbout:semi-commutativity}
Clerbout, M., Latteux, M., Roos, Y.: Semi-commutations. In: The Book of Traces,
  pp. 487--552. World Scientific (1995). \doi{10.1142/9789814261456\_0012}

\bibitem{DBLP:conf/cav/DamianDMW19}
Damian, A., Dragoi, C., Militaru, A., Widder, J.: Communication-closed
  asynchronous protocols. In: {CAV} {(2)}. pp. 344--363. Lecture Notes in
  Computer Science, Springer (2019). \doi{10.1007/978-3-030-25543-5\_20}

\bibitem{DBLP:journals/scp/ElradF82}
Elrad, T., Francez, N.: Decomposition of distributed programs into
  communication-closed layers. Sci. Comput. Program.  \textbf{2}(3),  155--173
  (1982). \doi{10.1016/0167-6423(83)90013-8}

\bibitem{esparza:keeping-crowd-safe}
Esparza, J.: Keeping a crowd safe: On the complexity of parameterized
  verification (invited talk). In: {STACS}. pp. 1--10. LIPIcs, Schloss Dagstuhl
  - Leibniz-Zentrum f{\"{u}}r Informatik (2014).
  \doi{10.4230/LIPICS.STACS.2014.1}

\bibitem{pldi22:sound-seq}
Farzan, A., Klumpp, D., Podelski, A.: Sound sequentialization for concurrent
  program verification. In: {PLDI}. pp. 506--521. {ACM} (2022).
  \doi{10.1145/3519939.3523727}

\bibitem{popl24:parameterized}
Farzan, A., Klumpp, D., Podelski, A.: Commutativity simplifies proofs of
  parameterized programs. Proc. {ACM} Program. Lang.  \textbf{8}({POPL}),
  2485--2513 (2024). \doi{10.1145/3632925}

\bibitem{farzan:predicting-atomicity-violations}
Farzan, A., Madhusudan, P.: The complexity of predicting atomicity violations.
  In: {TACAS}. pp. 155--169. Lecture Notes in Computer Science, Springer
  (2009). \doi{10.1007/978-3-642-00768-2\_14}

\bibitem{farzan20:red-safety-proofs}
Farzan, A., Vandikas, A.: Reductions for safety proofs. Proc. {ACM} Program.
  Lang.  \textbf{4}({POPL}),  13:1--13:28 (2020). \doi{10.1145/3371081}

\bibitem{flanagan:anchor}
Flanagan, C., Freund, S.N.: The anchor verifier for blocking and non-blocking
  concurrent software. Proc. {ACM} Program. Lang.  \textbf{4}({OOPSLA}),
  156:1--156:29 (2020). \doi{10.1145/3428224}

\bibitem{gleissenthal:pretend-synchrony}
von Gleissenthall, K., Kici, R.G., Bakst, A., Stefan, D., Jhala, R.: Pretend
  synchrony: synchronous verification of asynchronous distributed programs.
  Proc. {ACM} Program. Lang.  \textbf{3}({POPL}),  59:1--59:30 (2019).
  \doi{10.1145/3290372}

\bibitem{DBLP:conf/cav/HawblitzelPQT15}
Hawblitzel, C., Petrank, E., Qadeer, S., Tasiran, S.: Automated and modular
  refinement reasoning for concurrent programs. In: {CAV} {(2)}. pp. 449--465.
  Lecture Notes in Computer Science, Springer (2015).
  \doi{10.1007/978-3-319-21668-3\_26}

\bibitem{DBLP:conf/pldi/KraglEHMQ20}
Kragl, B., Enea, C., Henzinger, T.A., Mutluergil, S.O., Qadeer, S.: Inductive
  sequentialization of asynchronous programs. In: {PLDI}. pp. 227--242. {ACM}
  (2020). \doi{10.1145/3385412.3385980}

\bibitem{lipton75:movers}
Lipton, R.J.: Reduction: {A} method of proving properties of parallel programs.
  Commun. {ACM}  \textbf{18}(12),  717--721 (1975). \doi{10.1145/361227.361234}

\bibitem{DBLP:conf/pldi/LorchCKPQSWZ20}
Lorch, J.R., Chen, Y., Kapritsos, M., Parno, B., Qadeer, S., Sharma, U.,
  Wilcox, J.R., Zhao, X.: Armada: low-effort verification of high-performance
  concurrent programs. In: {PLDI}. pp. 197--210. {ACM} (2020).
  \doi{10.1145/3385412.3385971}

\bibitem{mazurkiewicz:trace-theory}
Mazurkiewicz, A.W.: Trace theory. In: Advances in Petri Nets. pp. 279--324.
  Lecture Notes in Computer Science, Springer (1986).
  \doi{10.1007/3-540-17906-2\_30}

\bibitem{DBLP:journals/computing/MutluergilT19}
Mutluergil, S.O., Tasiran, S.: A mechanized refinement proof of the {Chase-Lev}
  deque using a proof system. Computing  \textbf{101}(1),  59--74 (2019).
  \doi{10.1007/S00607-018-0635-4}

\end{thebibliography}
\end{document}